\newtheorem{definition}{Definition}
\newtheorem*{remark}{Remark}
\newtheorem{assumption}{Assumption}
\newtheorem{theorem}{Theorem}
\newtheorem{proposition}{Proposition}
\newtheorem{corollary}{Corollary}
\newtheorem{lemma}{Lemma}
\newcommand{\R}{\mathbb{R}}
\newcommand{\E}{\mathbb{E}}
\newtcolorbox{mybox}{colback=blue!5!white,colframe=blue!75!black}
\providecommand{\keywords}[1]
{
  \small	
  \textbf{\textit{Keywords---}} #1
}
\title{Sharp Bounds on the Variance of General Regression Adjustment in Randomized Experiments}
\date{September 2025}
\author[1]{Jonas M. Mikhaeil\footnote{Email for correspondence: j.mikhaeil@columbia.edu}}
\author[2]{Donald P. Green}
\affil[1]{Department of Statistics, Columbia University, New York}
\affil[2]{Department of Political Science, Columbia University, New York}
\begin{document}

\maketitle

\begin{abstract}
A growing statistical literature focuses on causal inference in the context of experiments where the target of inference is the average treatment effect in a finite population and random assignment determines which subjects are allocated to one of the experimental conditions.  In this framework, variances of average treatment effect estimators remain unidentified because they depend on the covariance between treated and untreated potential outcomes, which are never jointly observed. Conventional variance estimators are upwardly biased. Aronow, Green and Lee [Ann. Statist. 42(3): 850-871 (June 2014)] provide an estimator for the variance of the difference-in-means estimator that is asymptotically sharp.  In practice, researchers often use some form of covariate adjustment, such as linear regression, when estimating the average treatment effect.  Adapting propositions from empirical process theory, we extend the result in (Aronow et al., 2014), providing asymptotically sharp variance bounds for general regression adjustment. We apply these results to linear regression adjustment and show benefits both in a simulation and in three empirical applications drawn from different disciplines.
\end{abstract}
\keywords{Causal Inference, Finite Populations, General Regression Adjustment, Randomized Experiments, Variance Estimation}

\section{Introduction}
Although a century has passed since the publication of \citet{Splawa-Neyman_Dabrowska_Speed_1990}, the finite population framework for analyzing randomized experiments remains a vibrant topic, perhaps because it cleanly differentiates between the analysis of an experiment on a given set of subjects and extrapolations to other populations.  Recent scholarship has made important contributions to the understanding of the finite population setting (cf. central limit theorems in \citep{Li_Ding_CLT,Schochet_2022_CLT,Liu_Ren_Yang_2022}, yet variance estimation remains a lingering concern. Neyman's original result noted the fact that the variance of the difference-in-means estimator is not point-identified except under special conditions, such as homogeneous treatment effects among all subjects. The widely-used classical variance estimator, described below, is unbiased under homogeneous treatment effects but positively biased under heterogeneous treatment effects. The classical variance estimator is recommended in textbook treatments \citep{ding2023course,Imbens_Rubin_2015,gerber2012field} on the grounds that it offers a ``conservative'' assessment of variance, which in turn implies that confidence intervals will have nominal coverage or greater.

Unsatisfied with this approach, \citet{Splawa-Neyman_Dabrowska_Speed_1990} derived an alternative variance estimator based on the Cauchy-Schwarz inequality that provided a somewhat less conservative upper bound on the variance. 
In a similar vein,
\citet{Aronow_Green_Lee_2014}, generalizing a result in \citep{Robins_1988}, derive an alternative variance estimator that provides a sharp upper bound for the estimated variance.  Aronow et al. show that this estimator renders a less conservative variance estimate than the classic formula as well as the bounds proposed by \citet{Splawa-Neyman_Dabrowska_Speed_1990}. Because the advantage of their method lies in its performance under heterogeneous treatment effects, it is well-suited to randomized trials in which treatment effects are suspected to be heterogeneous ex ante, e.g., when pre-analysis plans propose to investigate heterogeneous effects across subgroups.  Given the many literature reviews in domains such as biomedical research \citep{Kent_Steyerberg_Klaveren_2018},
policy analysis \citep{Ferraro_Miranda_2013}, political science \citep{Kertzer_2022}, evaluation research \citep{Smith_2022}, psychology \citep{Hickin_Käll_Shafran_Sutcliffe_Manzotti_Langan_2021}, and behavioral science more generally \citep{Bryan_Tipton_Yeager_2021}
that offer theoretical and empirical grounds for expecting heterogeneous effects, it appears that many domains would benefit from less conservative variance estimation.  Even if the gains in precision from the sharp bounds estimator are often modest,\footnote{\citep[page 857]{Aronow_Green_Lee_2014} present an empirical example in which the conventional variance estimator is roughly 7 percent larger than the sharp bounds estimator.} the cumulative gains are sizable given the vast number of studies that could potentially benefit from more accurate variance estimation.\footnote{Like \citet{Aronow_Green_Lee_2014}, \citet{Imbens_Menzel_2021} use Frechét-Hoeffding copula bounds (see Section \ref{Sec:FHBounds}) to develop a bootstrap method for obtaining more accurate confidence intervals for the average treatment effect in the presence of treatment effect heterogeneity.}

As noted by \citet[p. 53]{ding2023course}, however, the method proposed by \citet{Aronow_Green_Lee_2014} is limited to differences-in-means estimation, whereas researchers analyzing randomized control trials typically use some form of covariate adjustment, such as linear regression \citep{Lin_reg_adj} or machine learning methods \citep{Bloniarz_2016,Wu_Gagnon_2018,Su_Mou_Ding_Wainwright_2023}. The same limitation applies to the causal bootstrap \citep{Imbens_Menzel_2021}. To fill this gap, the present paper extends the sharp bounds estimator to general regression adjustment. We explicitly derive sharp bounds for the variance of linear regression adjustment \citep{Lin_reg_adj} as well as a decorrelation method for general regression adjustment \citep{Su_Mou_Ding_Wainwright_2023}. We provide an R package\footnote{\href{https://github.com/JonasMikhaeil/SharpVarianceBounds}{https://github.com/JonasMikhaeil/SharpVarianceBounds}} to make our sharp variance estimators accessible to practitioners. 

The operating characteristics of our sharp variance estimator for linear regression adjustment are demonstrated via simulations. We also show the potential practical benefits of this approach by reanalyzing three experiments.  The first tests the ``anchoring effect'' that has long figured prominently in behavioral economics \citep{Lee_22}; the second assesses the effects of exposure to the TV sitcom \emph{Little Mosque on the Prairie} on attitudes toward Arabs \citep{Murrar_18}; the third revisits a biomedical trial that evaluates a treatment for acute lung injury \citep{BioExample}. In the appendix (Appendix \ref{Sec:Appl}), we also analyze experimental data from \citet{Harrison_Michelson_2012} that were used by \citet{Aronow_Green_Lee_2014}, again to show the potential practical benefits of sharp variance bounds for linear regression adjustment.
We conclude by discussing the advantages and limitations of sharp bounds for variance estimation in the context of regression-adjusted estimation of average treatment effects.

\section{Setup}

\subsection{Finite population paradigm}
\label{Sec:FinitPopSec}
Consider a finite population $U_N = \{(Y_i(1),Y_i(0),X_i)\}_{i=1}^N$, where $Y_i(1)$ and $Y_i(0)$ are the potential outcomes \citep{Splawa-Neyman_Dabrowska_Speed_1990,Rubin_1974} under treatment and control\footnote{We implicitly invoke the Stable Unit Treatment Value Assumption, which holds that subjects respond solely to their own treatment condition and that there are no hidden values of the treatment.}, and $X_i = (x_{i1},\dots,x_{ik})^\top $ are pre-treatment covariates (for convenience, we will assume $\bar X = 0$ throughout). Additionally, $Z = (Z_1,\dots,Z_N)^\top$ is a vector encoding the treatment assignment, such that $Z_i = 1$ if the unit is treated and $Z_i = 0$ otherwise. 

We are interested in determining the population average treatment effect
\begin{align}
    \tau = \frac{1}{N}\sum_{i=1}^N Y_i(1) - Y_i(0) = \bar{Y}(1) - \bar{Y}(0).
\end{align}
We consider the case of either a completely randomized experiment (CRE) in which $n_1$ of the $N$ units are randomly sampled into the treatment group and the remaining $n_0$ units receive the control or a Bernoulli randomized experiment (BRE) in which units are sampled into the treatment arm with probability $\pi_1$ (in this case, we define  $n_1 = \pi_1 N$ and $n_0 = (1 - \pi_1)N$). In both settings the Horvitz–Thompson estimator 
\begin{align}\nonumber
   \hat \tau^{HT} \, &=\,  \frac{1}{n_1} \sum_{i=1}^N Z_i Y_i(1) -  \frac{1}{n_0} \sum_{i=1}^N (1-Z_i) Y_i(0)\\\nonumber
   \, &\equiv\,  \bar Y(1)^S - \bar Y(0)^S
\end{align}
is unbiased and consistent (under mild regularity conditions on the outcomes) for the average treatment effect $\tau$. For CREs, the estimator is often called the difference-in-means (DiM).

Although the average treatment effect is point-identified, the variance of the difference-in-means estimator \begin{align}
\label{Eq:Var_Dim}
    \text{Var}(\hat \tau^{HT}) = \frac{1}{N} \bigg( \frac{n_0}{n_1} S^2(Y(1)) + \frac{n_1}{n_0} S^2(Y(0)) + 2 S(Y(1),Y(0))\bigg)
\end{align}
is not. 
The sample variances $\hat S^2(Y(1)) = \frac{1}{n_1 -1} \sum_{i=1}^N Z_i(Y_i(1)-\bar Y(1)^S)^2$ and $\hat S^2(Y(0)) = \frac{1}{n_0 -1} \sum_{i=1}^N (1-Z_i)(Y_i(0)-\bar Y(0)^S)^2$ are consistent (and unbiased) estimators for the population variances $S^2(Y(q)) = \frac{1}{N} \sum_{i=1}^N (Y_i(q)-\bar Y(q))^2$ (for $q \in \{0,1\}$) \citep{Li_Ding_CLT}; however, the covariance between the potential outcomes $S(Y(1),Y(0))=\frac{1}{N} \sum_{i=1}^N (Y_i(1)-\bar Y(1))(Y_i(0)-\bar Y(0))$ remains unidentified because $Y_i(1)$ and $Y_i(0)$ are never observed jointly.  
Following \citet{Splawa-Neyman_Dabrowska_Speed_1990}, a variance estimator based on the Cauchy-Schwarz inequality and the arithmetic mean-geometric mean (AM-GM) inequality $2 S(Y(1),Y(0)) \leq 2 \sqrt{S^2(Y(1))S^2(Y(0))} \leq S^2(Y(1))+S^2(Y(0)) $ may be used \citep{ding2023course}:
\begin{align}
\label{Eq:Neyman}
    \widehat{\text{Var}}(\hat \tau^{HT}) = \frac{1}{N}\bigg(\frac{1}{n_1}\hat S^2(Y(1))+\frac{1}{n_0}\hat S^2(Y(0))\bigg).
\end{align}
This estimator is conservative in that $\E[ \widehat{\text{Var}}(\hat \tau^{HT})] -  \text{Var}(\hat \tau^{HT}) \geq 0$. Equality holds if and only if the treatment effect is constant $\tau_i \equiv \tau$ \citep{Gadbury_2001}.\footnote{
\citet{Li_Ding_CLT} provide a central limit theorem for the finite-sample setting and show that $\hat \tau^{HT}$ is asymptotically normal under classical regularity conditions. Using Neyman's conservative variance estimator $\widehat{\text{Var}}(\hat \tau^{HT})$, an asymptotically conservative level-$\alpha$ Wald-type confidence interval $\hat \tau^{HT} \pm z_{1-\alpha/2} \sqrt{\widehat{\text{Var}}(\hat \tau^{HT})}$ can be constructed.}

In the remainder of this paper, we will often work with the marginals
$G_N(y) = \frac{1}{N}\sum_{i=1}^N \mathbbm{1}[Y_i(1)\leq y]$ and $F_N(y) = \frac{1}{N}\sum_{i=1}^N \mathbbm{1}[Y_i(0)\leq y]$ of the potential outcomes.
These finite population distributions are useful as their moments match the finite population quantities we are interested in.
Throughout, we will use the following notation $\mathbb{E}_{G}[Y] \equiv \mathbb{E}_{Y \sim G}[Y]$. Then, for example, $\mathbb{E}_{G_N}[Y(1)] = \frac{1}{N}\sum_{i=1}^N Y_i(1) = \bar{Y}(1)$ and $\frac{N}{N-1}\text{Var}_{G_N}(Y(1)) = \frac{1}{N-1}\sum_{i=1}^N \big(Y_i(1) - \bar{Y}(1)\big)^2 = S^2(Y(1))$. Similarly, let $H(y_1,y_0)$ be a joint distribution with marginals $G(y_1)$ and $F(y_0)$, then we write $\text{Cov}_{H}(Y_1,Y_0) \equiv \text{Cov}_{(Y_1,Y_0)\sim H}(Y_1,Y_0) = \E_{(Y_1,Y_0)\sim H}[Y(1)Y(0)]-\E_{Y(1)\sim G}[Y(1)]\E_{Y(0)\sim F}[Y_0]$.
\subsection{Fréchet-Hoeffding Copula bounds and sharp variance bounds for DiM}
\label{Sec:FHBounds}
Neyman's conservative variance estimator (Equation \ref{Eq:Neyman}) uses only information about the second moments of the distribution of the potential outcomes. \citet{Aronow_Green_Lee_2014} derive sharp variance bounds for the difference-in-means estimator in a CRE using the marginal distributions of the potential outcomes. For further discussion of sharpness and optimality of variance bounds, see Appendix \ref{App:Sharpness}.

Using the Fréchet-Hoeffding copula bounds, the largest and smallest attainable variances given the marginal distributions of the potential outcomes can be identified. The works of \citet{Frechet_1960} and \citet{Hoeffding_1940} provide bounds on copulas:
\begin{theorem}[Fréchet–Hoeffding bounds, see \citep{Jaworski_Durante_Härdle_Rychlik_2010}]
   For any $d$-dimensional copula $C$ and any $u = (u_1,\dots,u_d) \in [0,1]^d$, the following bounds hold
    \begin{align}\nonumber
    W(u) \leq C(u) \leq M(u), 
    \end{align}
    where $W(u) := \max(1-d +\sum_{i=1}^d u_i,0)$ and $M(u) := \min(u_1,\dots,u_d)$ are the Fréchet–Hoeffding lower and upper bounds, respectively. 
\end{theorem}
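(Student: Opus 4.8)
The plan is to exploit the probabilistic meaning of a copula: $C$ is the joint cumulative distribution function of a random vector $(U_1,\dots,U_d)$ whose marginals are uniform on $[0,1]$, so that $C(u) = \mathbb{P}(U_1 \le u_1, \dots, U_d \le u_d)$ and $\mathbb{P}(U_i \le u_i) = u_i$ for each $i$. Both inequalities then reduce to elementary monotonicity and finite subadditivity of probability measures, so the entire argument is essentially two one-line estimates.

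For the upper bound, I would observe that the event $\{U_1 \le u_1, \dots, U_d \le u_d\} = \bigcap_{i=1}^d \{U_i \le u_i\}$ is contained in each single event $\{U_i \le u_i\}$. Monotonicity of $\mathbb{P}$ then gives $C(u) \le \mathbb{P}(U_i \le u_i) = u_i$ for every $i$, and taking the minimum over $i$ yields $C(u) \le \min(u_1,\dots,u_d) = M(u)$.

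For the lower bound, I would pass to complements, writing $C(u) = 1 - \mathbb{P}\big(\bigcup_{i=1}^d \{U_i > u_i\}\big)$. The union bound gives $\mathbb{P}\big(\bigcup_{i=1}^d \{U_i > u_i\}\big) \le \sum_{i=1}^d \mathbb{P}(U_i > u_i) = \sum_{i=1}^d (1 - u_i) = d - \sum_{i=1}^d u_i$, so that $C(u) \ge 1 - d + \sum_{i=1}^d u_i$. Since a probability is also nonnegative, $C(u) \ge 0$; combining the two bounds gives $C(u) \ge \max(1 - d + \sum_{i=1}^d u_i, 0) = W(u)$.

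There is no substantial obstacle here: the result is classical and both directions follow from one-line facts about probability measures. The only point requiring a little care is the justification for treating an abstract copula as a genuine joint law with uniform marginals (a Sklar-type representation); once that identification is in place, the monotonicity and union-bound steps are immediate. If one preferred to avoid the probabilistic representation, the same two inequalities follow directly from the defining axioms of a copula (groundedness, uniform marginals, and $d$-increasingness), but the probabilistic route is by far the cleanest.
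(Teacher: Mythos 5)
Your proof is correct. The paper itself offers no proof of this theorem---it is stated as a classical result with a citation to \citet{Jaworski_Durante_Härdle_Rychlik_2010}---and your argument is precisely the standard one found in that literature: the upper bound from monotonicity of the measure applied to $\bigcap_{i=1}^d \{U_i \le u_i\} \subseteq \{U_j \le u_j\}$, and the lower bound from the union bound on complements together with nonnegativity. You are also right to flag the one point needing care, namely that an abstract copula can be identified with the joint law of a vector with uniform marginals; this follows from the copula axioms (groundedness and $d$-increasingness yield a nonnegative measure on rectangles extending to a Borel probability measure, and the margin condition gives uniform marginals), so once that identification is granted, both inequalities go through exactly as you wrote them.
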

The upper bound is always a copula and hence sharp. The lower bound is generally only a copula for $d < 3$ \citep{Joe_2014}.
In two dimensions, this leads to the following Lemma \citep{Puccetti_Wang_2015}:
\begin{lemma}
\label{lemma:Hoeffding}
    Given only the marginal CDFs $G$ and $F$ of the random variables $Y_1$ and $Y_0$, the bounds
    \begin{equation}\nonumber
        \E_{H^L}\big[Y_1Y_0\big]
        \leq \E[Y_1Y_0] \leq   \E_{H^H}\big[Y_1Y_0\big], 
    \end{equation}
with $H^L(y_1,y_0)=  \max\{0,G(y_1)+F(y_0)-1\}$ and $H^H(y_1,y_0) = \min\{G(y_1),F(y_0)\}$, are sharp. The upper bound is attained if $Y_1$ and $Y_0$ are comonotonic, and the lower bound is attained if $Y_1$ and $Y_0$ are countermonotonic.
\end{lemma}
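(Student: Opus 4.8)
The plan is to reduce the claim to the Fréchet–Hoeffding Theorem (already stated) via Sklar's representation and Hoeffding's covariance identity, so that the problem of extremizing $\E[Y_1Y_0]$ over all joint laws with prescribed marginals $G$ and $F$ becomes a pointwise comparison of copulas. First I would invoke Sklar's theorem to write the joint CDF of $(Y_1,Y_0)$ as $H(s,t)=C\bigl(G(s),F(t)\bigr)$ for some $2$-copula $C$; then every admissible coupling corresponds to a choice of $C$ and conversely, so $\E_W[\cdot]$ and $\E_M[\cdot]$ are exactly the expectations under the couplings whose copulas are the Fréchet–Hoeffding lower and upper bounds.

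Next I would apply Hoeffding's covariance formula,
\begin{align}\nonumber
\mathrm{Cov}(Y_1,Y_0)=\int_{\R}\int_{\R}\bigl[H(s,t)-G(s)F(t)\bigr]\,ds\,dt,
\end{align}
which I would justify by the standard independent-copy trick (writing $Y_z-Y_z'=\int(\mathbbm{1}[y\le Y_z]-\mathbbm{1}[y\le Y_z'])\,dy$ and taking expectations, then applying Fubini). Integrability is automatic in the finite-population setting because all outcomes take finitely many values and are therefore bounded. Since the marginals $G,F$ are held fixed, $\E[Y_1]$, $\E[Y_0]$, and hence $\E[Y_1]\E[Y_0]$ are fixed; thus extremizing $\E[Y_1Y_0]=\E[Y_1]\E[Y_0]+\mathrm{Cov}(Y_1,Y_0)$ is equivalent to extremizing $\int\int H\,ds\,dt$.

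I would then apply the Fréchet–Hoeffding Theorem pointwise at each $(s,t)$: with $u=G(s)$ and $v=F(t)$ one has $W(u,v)\le C(u,v)\le M(u,v)$, and integrating these inequalities against $ds\,dt$ (monotonicity of the integral) yields the two-sided bound on $\int\int H$, hence on $\E[Y_1Y_0]$. This establishes the inequalities; it remains to show they are \emph{sharp}, which I expect to be the main obstacle. For the upper bound I would exhibit the comonotonic coupling: because $M$ is itself a copula, the pair $(G^{-1}(U),F^{-1}(U))$ for a single $U\sim\mathrm{Unif}[0,1]$ realizes $H=M(G,F)$ and attains the maximum; for the lower bound I would use that in dimension $d=2$ the bound $W$ is also a copula, realized by the countermonotonic coupling $(G^{-1}(U),F^{-1}(1-U))$.

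The delicate point is the discrete/finite-population case, where $G_N$ and $F_N$ have atoms, so $G^{-1},F^{-1}$ must be read as generalized quantile inverses and one must check that the extremal couplings genuinely attain $M$ and $W$ at the jump points. I would resolve this with an explicit sorting argument: pairing the sorted order statistics of the two samples in the same order attains the comonotonic (upper) bound, and pairing them in opposite order attains the countermonotonic (lower) bound, with ties handled by any consistent tie-breaking since the value of $\E[Y_1Y_0]$ is insensitive to the internal ordering within a tie block. Together with the pointwise copula inequalities, this shows both bounds are achieved and hence sharp.
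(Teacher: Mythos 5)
Your proof is correct and takes essentially the same route the paper does: the paper presents Lemma \ref{lemma:Hoeffding} as a direct consequence of the Fréchet--Hoeffding theorem (using that in dimension two both $W$ and $M$ are copulas), and you simply supply the standard supporting details — Sklar's representation, Hoeffding's covariance identity, and attainment via the quantile couplings $\big(G^{-1}(U),F^{-1}(U)\big)$ and $\big(G^{-1}(U),F^{-1}(1-U)\big)$. Note that your worry about atoms is already resolved by the generalized-inverse equivalence $G^{-1}(u)\le s \iff u \le G(s)$ (valid for the left-continuous inverse by right-continuity of $G$), so your sorting argument, while valid and in fact what the paper uses computationally in Section \ref{Sec:Calc}, is not needed for attainment.
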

The application to randomized experiments is immediate. For a finite population $U_N$ with marginals $G_N (y) =1/N \sum_{i=1}^N \mathbbm{1}\{Y_i(1) \leq y\}$ and $F_N (y) =1/N \sum_{i=1}^N \mathbbm{1}\{Y_i(0) \leq y\}$, the joint distributions of extremal dependence are $H_N^H(Y_1,Y_0) = \min\{G_N(Y_1),F_N(Y_0)\}$ and $H_N^L(Y_1,Y_0) = \max\{0,G_N(Y_1)+F_N(Y_0)-1\}$. With Lemma \ref{lemma:Hoeffding}, we have that
\begin{align}\nonumber
    V_N^H \,&=\, \frac{n_0}{Nn_1}\text{Var}_{G_N}\big(Y(1)\big) + \frac{n_1}{Nn_0}\text{Var}_{F_N}\big(Y(0)\big) + \frac{2}{N}\text{Cov}_{H_N^H}\big(Y(1),Y(0)\big), \\\nonumber
     V_N^L \,&=\, \frac{n_0}{Nn_1}\text{Var}_{G_N}\big(Y(1)\big) + \frac{n_1}{Nn_0}\text{Var}_{F_N}\big(Y(0)\big) + \frac{2}{N}\text{Cov}_{H_N^L}\big(Y(1),Y(0)\big)
\end{align}
are sharp bounds for $\text{Var}(\hat \tau^{DiM})$. 
In the context of a randomized experiment, the marginals $G_N$ and $F_N$ remain unobserved. Instead we estimate the marginals with the empirical distributions $\hat G_N(y) = 1/n_1 \sum_{i=1}^N Z_i\mathbbm{1}\{Y_i(1)\leq y\}$ and $ \hat F_N(y) = 1/n_0 \sum_{i=1}^N (1-Z_i) \mathbbm{1}\{Y_i(0)\leq y\}$. We provide a P-Glivenko-Cantelli result for empirical distributions in the finite population setting in Appendix \ref{App:EmpProc}. While these results may be useful more broadly in the finite population setting, here they will be crucial to prove consistency of sharp variance bound estimators for general regression adjustment, see Section \ref{Sec:SharpBoundsReg}.

\citet{Aronow_Green_Lee_2014} show that under some regularity conditions estimates of the sharp bounds $\hat V_N^H$ and $\hat V_N^L$ based on the empirical distributions $\hat G_N$  and $\hat F_N$ are consistent for the upper and lower bound, respectively. Based on the upper bound $\hat V_N^H$, the asymptotically narrowest conservative level-$\alpha$ Wald-type confidence interval $\hat \tau^{DiM} \pm z_{1-\alpha/2}\sqrt{\hat V_N^H}$ can be constructed.

\subsection{Regression adjustment}
\label{Sec:RegAdj}
The simple Horvitz-Thompson estimator $\hat \tau^{HT}$ gives an unbiased estimate of the treatment effect. If prognostic covariates are available, their inclusion can improve the precision with which the average treatment effect is estimated \citep{Freedman_2008,Lei_Ding_2021,Lu_Yang_Wang_2023,Zhao_Ding_Li_2024,Su_Mou_Ding_Wainwright_2023}. 

In general, the goal of regression adjustment is to find outcome models $f_1: \R^k \rightarrow \R$ and $f_0: \R^k \rightarrow \R$ such that the population-adjusted potential outcomes 
\begin{align}\nonumber
    \varepsilon_i(q) := Y_i(q) - f_q(X_i) \hspace{1cm} \text{for } q \in \{0,1\}
\end{align}
are as close to zero as possible. 

If all potential outcomes were observable, we would choose the optimal outcome model given a function class $\mathcal{F}$
\begin{align}\nonumber
    f_q \in \text{arg}\min_{f_q \in \mathcal{F}}\bigg \{\sum_{i=1}^N (Y_i(q) - f_q(X_i) )^2 \bigg\}.
\end{align}
Based on these (inaccessible) \textit{oracle} models, we can form an oracle-adjusted estimator
\begin{align}
\label{Eq:GenRegAdjOracle}
    \hat \tau_N^{\text{oracle}} = \frac{1}{n_T} \sum_{i=1}^N T_i(Y_i(1) - f_1(X_i)) - \frac{1}{n_{\bar T}} \sum_{i=1}^N \bar{T}_i(Y_i(0) - f_0(X_i)) + \frac{1}{N} \sum_{i=1}^N (f_1(X_i) - f_0(X_i)).
\end{align}
To allow for a more general class of estimators, the oracle-adjusted estimator is not defined in terms of the treatment indicators $(Z_i,1-Z_i)$ directly but instead based on more general sampling indicators $T_i$ and $\bar{T}_i$ that determine which units are included in the calculation of the oracle-adjusted estimator. They need to satisfy $T_i \leq Z_i$ and $\bar T_i \leq 1-Z_i$ because we can only include observed potential outcomes.
These indicators can be chosen to be the treatment indicator $(Z_i,1-Z_i)$ themselves so that all units are included in the calculation of the oracle-adjusted estimator. However, they can also be chosen to fulfill other properties (such as decorrelation, see \citep{Su_Mou_Ding_Wainwright_2023}). To accommodate both the setting for BREs and CREs, we will make the following assumption about the indicators: 
\begin{assumption}[Random indicators]
\label{as:RandomAssignment}
    The random indicators $(T_i,\bar{T}_i)$ are one of the following 
    \begin{enumerate}
        \item Bernoulli random variables such that $\text{Cov}(T_i,T_j) = \pi_T (1-\pi_T) \delta_{ij}$, $\text{Cov}(\bar{T}_i,\bar{T}_j) = \pi_{\bar{T}} (1-\pi_{\bar{T}}) \delta_{ij}$ and  $\text{Cov}(T_i,\bar{T}_j) = - \pi_T \pi_{\bar{T}} \delta_{ij}$. Assume $\pi_T, \pi_{\bar T} \in (0,1)$and define $ n_T := N\pi_T$ and $n_{\bar T} :=N \pi_{\bar{T}}$.
        \item$(T_i,\bar{T}_i) = (T_i, 1-T_i)$, where $T_i$ encodes a simple random sample (without replacement) of $n_T$ out of $N$ units. In this case, further assume that $n_T/N \rightarrow \pi_T \in (0,1)$ as $N \rightarrow \infty$ and define $n_{\bar{T}} = N - n_T$ and $\pi_{\bar{{T}}} = 1- \pi_T$.
    \end{enumerate}
\end{assumption}
 The variance of the oracle estimator under this assumption is 
\begin{align}\label{Eq:VarOracle}
      \text{Var}(\hat \tau_N^{oracle}) = \frac{1}{N} \bigg(\frac{N-n_T}{n_T} s_N(1)^2 +  \frac{N-n_{\bar{T}}}{n_{\bar{T}}} s_N(0)^2
       +  2 \frac{1}{N}\sum_{i=1}^N\varepsilon_{i,N}(1) \varepsilon_{i,N}(0)\bigg),
\end{align}
where $s_N(q) = \frac{1}{N}\sum_{i=1}^N \varepsilon_{i,N}(q)^2$.
We refer to the last term including both adjusted potential outcomes as the cross-term between adjusted potential outcomes. 

In practice, we face the difficulty that only the potential outcomes associated with the allocated treatment are observed. Due to this inherent problem of missingness,
we need to work with an estimator of the form
\begin{align}\nonumber
\hat \tau_N 
\, & = \, \frac{1}{n_T} \sum_{i=1}^N T_i \hat \varepsilon_{i,N}(1) - \frac{1}{n_{\bar{T}}} \sum_{i=1}^N \bar{T}_i \hat{\varepsilon}_{i,N}(0) + \frac{1}{N} \sum_{i=1}^N(\hat f_{1,N}(X_i) - \hat f_{0,N}(X_i)) , 
\end{align}
where $\hat f_1$ is an estimate of $f_1$ based on some subset of the treated units and $\hat f_0$ is an estimate of $f_0$ based on a subset of the control units. We refer to $\hat{\varepsilon}_{i,N}(q)= Y_i(q) - \hat f_{q,N}(X_i)$ as the sample-adjusted potential outcomes. In Section \ref{Sec:LinRegAdj}, we will focus on linear regression adjustment \citep{Lin_reg_adj}, where $\mathcal{F}$ is chosen to be the class of all linear outcome models and the estimates $\hat f_q$ are calculated on all observed units in the respective treatment group. Section \ref{Sec:DecorRegAdj} focuses on a framework that allows for more general regression adjustment.

In the cases considered in this paper, the additional variation introduced by the estimates $\hat f_1$ and $\hat f_0$ is asymptotically negligible. However, even if the oracle models $f_q$ were known, the variance of the oracle estimator (Equation \ref{Eq:VarOracle}) is not point-identified because we cannot observe the cross-term including both the adjusted treated and untreated potential outcomes.
Following \citep{Splawa-Neyman_Dabrowska_Speed_1990}, one can use the Cauchy-Schwarz inequality to arrive at the following asymptotically conservative variance estimator
\begin{align}
\label{Eq:CSVarEst}
      \widehat{\text{Var}}(\hat \tau_N)^{\text{CS}} = \frac{1}{N} \bigg(\frac{N-n_T}{n_T} \hat s_N^2(1) + \frac{N-n_{\bar{T}}}{n_{\bar{T}}} \hat s_N^2(0) +  2 (\hat s_N^2(1))^{1/2}(\hat s_N^2(0))^{1/2}\bigg),
\end{align}
where $\hat s_N^2(q) =\frac{1}{N}\sum_{i=1}^N\hat \varepsilon_{i,N}(q)^2$.
Conventionally, the more convenient (but often more conservative) estimator based on the AM-GM inequality is used
\begin{align}
\label{eq:convVarEst}
      \widehat{\text{Var}}(\hat \tau_N)^{\text{Conv}} =  \frac{1}{n_T} \hat s_N^2(1) + \frac{1}{n_{\bar{T}}} \hat s_N^2(0).
\end{align}

In Section \ref{Sec:SharpBoundsReg}, we discuss a Proposition that provides consistency results on estimators of sharp variance bounds for general regression estimators.

While linear regression adjustment is the main focus of this paper, our results are more general. 
For our results on sharp bounds (Proposition \ref{Prop:ConsistencyOfSharpBounds}) to be practically applicable, there needs to be an available asymptotic theory for the estimator in question. For linear regression, such an asymptotic theory was offered in \citep{Lin_reg_adj,Li_Ding_CLT}. While the asymptotics of general regression adjustment have received much attention recently \citep{Bloniarz_2016,Wu_Gagnon_2018,Guo_Basse_2023}, we will focus on the decorrelation method for general regression adjustment developed in \citep{Su_Mou_Ding_Wainwright_2023} because their results only require $o_p(1)$ consistency for the outcome estimates.
In their case, $\mathcal{F}$ remains broad (e.g, non-parametric function classes under fairly general constraints, see Section 4.3 of \citep{Su_Mou_Ding_Wainwright_2023}), but the subsets on which $\hat f_q$ are calculated are chosen to achieve decorrelation. This mitigates finite-sample bias and allows us to establish asymptotic normality for general regression adjustment assuming only standard regularity conditions and $o_p(1)$ consistency of the outcome models. 

Linear regression adjustment and the decorrelation method for general regression adjustment are, however, only examples of applications of our sharp bounds. 
If in the future other asymptotic results for general regression adjustment are derived, it is likely that our sharp bounds (Proposition \ref{Prop:ConsistencyOfSharpBounds}) will apply to them as well.
\section{Sharp bounds for general regression adjustment}
\label{Sec:SharpBoundsReg}
\citet{Aronow_Green_Lee_2014} derived consistent estimators for sharp variance bounds of the difference-in-means estimator in a CRE (see Section \ref{Sec:FHBounds}).
We want to extend their Fréchet-Hoeffding-type sharp variance bounds to general regression adjustment.
Consider the variance of the oracle estimator for general regression adjustment (under Assumption \ref{as:RandomAssignment}):
\begin{align}\nonumber
      \text{Var}(\hat \tau_N^{oracle}) = \frac{1}{N} \bigg(\frac{N-n_T}{n_T} \frac{1}{N}\sum_{i=1}^N \varepsilon_{i,N}(1)^2 + \frac{N-n_{\bar{T}}}{n_{\bar{T}}} \frac{1}{N}\sum_{i=1}^N \varepsilon_{i,N}(0)^2 + 2 \frac{1}{N}\sum_{i=1}^N\varepsilon_{i,N}(1) \varepsilon_{i,N}(0)\bigg).
\end{align}  Even when the oracle models $f_1$ and $f_0$ are known, the variance is not point-identified because of the cross-term between the population-adjusted potential outcomes.

To make use of Fréchet-Hoeffding type bounds, define the marginal distributions over the population-adjusted potential outcomes $G_N(\xi) = 1/N \sum_{i=1}^N \mathbbm{1}\{\varepsilon_{i,N}(1) \leq \xi\}$ and $F_N(\xi) = 1/N \sum_{i=1}^N \mathbbm{1}\{\varepsilon_{i,N}(0) \leq \xi\}$. Application of Lemma \ref{lemma:Hoeffding} leads to
\begin{align}\nonumber
     V_N^H \,&=\, \frac{1}{N}\bigg(\frac{N-n_T}{n_T} \E_{G_N}\big[\varepsilon_N(1)^2\big] + \frac{N-n_{\bar{T}}}{n_{\bar{T}}}\E_{F_N}\big[\varepsilon_N(0)^2\big] + 2\E_{H_N^H}\big[\varepsilon_N(1)\varepsilon_N(0)\big]\bigg), \\\nonumber
    V_N^L \,&=\, \frac{1}{N}\bigg(\frac{N-n_T}{n_T} \E_{G_N}\big[\varepsilon_N(1)^2\big] + \frac{N-n_{\bar{T}}}{n_{\bar{T}}}\E_{F_N}\big[\varepsilon_N(0)^2\big] + 2\E_{H_N^L}\big[\varepsilon_N(1)\varepsilon_N(0)\big]\bigg)
\end{align}
 sharp bounds on  $\text{Var}(\hat \tau_N^{oracle})$. The joint distributions of extremal dependence are defined as $H_N^H(\varepsilon_1,\varepsilon_0) = \min\{G_N(\varepsilon_1),F_N(\varepsilon_0)\}$ and $H_N^L(\varepsilon_1,\varepsilon_0) = \max\{0,G_N(\varepsilon_1)+F_N(\varepsilon_0)-1\}$.

The marginals $G_N$ and $F_N$ remain unobserved; instead, we work with the empirical distribution functions $\hat G_N(\xi) = 1/{\tilde n_T} \sum_{i=1}^N T_i \mathbbm{1}\{\hat \varepsilon_{i,N}(1) \leq \xi\}$ and $\hat F_N(\xi) = 1/{\tilde n_{\bar{T}}} \sum_{i=1}^N \bar{T}_i \mathbbm{1}\{\hat \varepsilon_{i,N}(0) \leq \xi\}$, where $\tilde n_T = \sum_{i=1}^N T_i$ and $\tilde n_{\bar T} = \sum_{i=1}^N \bar T_i$.\footnote{In case that either $\tilde n_T$ or $\tilde n_{\bar T}$ are 0, we define the respective empirical distribution to be $\delta(0)$. The probability of this happening vanishes as $N$ tends to infinity so that this choice does not affect our asymptotic results. For details, see the proof of Proposition \ref{Prop:ConsistencyOfSharpBounds} in Appendix \ref{Appendix:proof_prop1}. } These yield the plug-in estimators for the sharp variance bounds
 \begin{align}
 \label{Eq:estmVarBoundsReg}
      \hat V_N^H \,&=\, \frac{1}{N}\bigg(\frac{N-n_T}{n_T} \E_{\hat G_N}\big[\hat\varepsilon_N(1)^2\big] + \frac{N-n_{\bar{T}}}{n_{\bar{T}}}\E_{\hat F_N}\big[\hat\varepsilon_N(0)^2\big] + 2\E_{\hat H_N^H}\big[\hat\varepsilon_N(1)\hat\varepsilon_N(0)\big] \bigg), \\\nonumber
     \hat V_N^L \,&=\, \frac{1}{N}\bigg(\frac{N-n_T}{n_T} \E_{\hat G_N}\big[\hat\varepsilon_N(1)^2\big] + \frac{N-n_{\bar{T}}}{n_{\bar{T}}}\E_{\hat F_N}\big[\hat\varepsilon_N(0)^2\big] + 2\E_{\hat H_N^L}\big[\hat\varepsilon_N(1)\hat\varepsilon_N(0)\big]\bigg).
 \end{align}
To study the asymptotic regime of these estimators of the sharp bounds, we need the following regularity condition:
 \begin{assumption}[Convergence of the joint distribution of the population-adjusted potential outcomes]
 \label{as:convergJoint}
 The joint distribution of the population-adjusted potential outcomes  $H_N$ converges weakly to a distribution $H$ with marginals $H(\cdot,\infty) = G(\cdot)$ and $H(\infty,\cdot)=F(\cdot)$.
 \end{assumption}
 
 Compared to \citep{Aronow_Green_Lee_2014}, we face the additional difficulty that the arguments of the indicator functions in $\hat G_N$ and $\hat F_N$ now depend on the random indicators $(T_i,\bar{T}_i)$. This difficulty arises because we do not get to observe the population-adjusted potential outcomes $\varepsilon_{i,N}(z)$ based on the oracle models $f_{1,N}$ and $f_{0,N}$ but only the sample-adjusted potential outcomes $\hat \varepsilon_{i,N}(z)$ that depend on the (random) estimates of the outcome models $\hat f_{1,N}$ and $\hat f_{0,N}$. 
The following proposition shows that this difficulty can be overcome without additional assumptions (beyond classical regularity conditions). We show that the plug-in estimators for the sharp variance bounds (Equation \ref{Eq:estmVarBoundsReg})
 are consistent.
 \begin{proposition}
 \label{Prop:ConsistencyOfSharpBounds}
      Let Assumption \ref{as:RandomAssignment} and \ref{as:convergJoint} hold. Further, assume
     \begin{enumerate}[label=(\alph*)]
         \item The estimates of the outcome models are $o_p(1)$ consistent, that is,
    \begin{align}\nonumber
       \big( \frac{1}{N} \sum_{i=1}^N (f_{q,N}(X_i) - \hat f_{q,N}(X_i))^2 \big)^{1/2}= o_p(1) \hspace{0.3cm} \text{for } q \in \{0,1\}.
    \end{align} 
         \item The population-adjusted potential outcomes $\varepsilon_{i,N}(q) = Y_i(q) - f_{q,N}$ are uniformly square-integrable, that is
         \begin{align}\nonumber
              \sup_N \frac{1}{N} \sum_{i=1}^N \varepsilon_{i,N}(q)^2 \mathbbm{1}[\varepsilon_{i,N}(q)^2 \geq \beta] \rightarrow 0
         \end{align}
        as $\beta \rightarrow \infty$ for $q \in \{0,1\}$.
     \end{enumerate}
     Let $\mathcal{H}$ be the collection of all bivariate distributions with marginals $G$ and $F$, then
     \begin{align}\nonumber
         N V_N^H \rightarrow \frac{1-\pi_T}{\pi_T} \E_G[\varepsilon(1)^2] +  \frac{1-\pi_{\bar{{T}}}}{\pi_{\bar{{T}}}}\E_{F}[\varepsilon(0)^2] + 2 \sup_{h \in \mathcal{H}}\E_h[\varepsilon(1)\varepsilon(0)] \\\nonumber
          N V_N^L \rightarrow \frac{1-\pi_T}{\pi_T} \E_G[\varepsilon(1)^2] +  \frac{1-\pi_{\bar{{T}}}}{\pi_{\bar{{T}}}}\E_{F}[\varepsilon(0)^2] + 2 \inf_{h \in \mathcal{H}}\E_{h}[\varepsilon(1)\varepsilon(0)],
     \end{align}
     and $(\hat V_N^L - V^L_N,\hat V_N^H - V^H_N) =o_p(1/N)$.
 \end{proposition}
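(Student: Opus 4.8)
The plan is to split the statement into two pieces: the deterministic convergence of the population bounds $NV_N^H, NV_N^L$ to the asserted limits $L^H, L^L$, and the in-probability convergence of the plug-in bounds $N\hat V_N^H, N\hat V_N^L$ to the \emph{same} limits. These two pieces immediately yield the final claim, since writing $N(\hat V_N^H - V_N^H) = (N\hat V_N^H - L^H) - (NV_N^H - L^H)$ exhibits the difference as $o_p(1)-o(1)=o_p(1)$, i.e.\ $\hat V_N^H - V_N^H = o_p(1/N)$, and symmetrically for the lower bound. So the two targets are (1) $NV_N^H\to L^H$, $NV_N^L\to L^L$ deterministically, and (2) $N\hat V_N^H\to_p L^H$, $N\hat V_N^L\to_p L^L$.

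For target (1), the coefficients converge by Assumption~\ref{as:RandomAssignment}, since $\tfrac{N-n_T}{n_T}\to\tfrac{1-\pi_T}{\pi_T}$ and likewise for $\bar T$. For the marginal second moments I would combine the weak convergence $G_N\Rightarrow G$ (a marginal of Assumption~\ref{as:convergJoint}) with the uniform square-integrability of hypothesis~(b): weak convergence plus uniform integrability of $\varepsilon_N(q)^2$ gives $\E_{G_N}[\varepsilon_N(1)^2]\to\E_G[\varepsilon(1)^2]$ and $\E_{F_N}[\varepsilon_N(0)^2]\to\E_F[\varepsilon(0)^2]$. For the cross-term, Lemma~\ref{lemma:Hoeffding} identifies the comonotonic coupling value $\E_{H_N^H}[\varepsilon_N(1)\varepsilon_N(0)]$ with $\sup_{h}\E_h[\cdots]$ over couplings of $G_N,F_N$, and the countermonotonic value with the infimum. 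Because $|ab|\le\tfrac12(a^2+b^2)$, uniform integrability of the squares transfers to the products, so from $H_N^H\Rightarrow H^H$ (the comonotonic coupling is continuous in its marginals) I can pass to the limit and obtain $\E_{H_N^H}[\varepsilon_N(1)\varepsilon_N(0)]\to\sup_h\E_h[\varepsilon(1)\varepsilon(0)]$; the lower bound is symmetric, and collecting terms produces $L^H,L^L$.

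For target (2) I would first treat the marginal terms and then reduce the cross-term to a Wasserstein control. Writing $\hat\varepsilon_{i,N}(1)=\varepsilon_{i,N}(1)-(\hat f_{1,N}-f_{1,N})(X_i)$, expanding the square and applying Cauchy--Schwarz bounds $\E_{\hat G_N}[\hat\varepsilon_N(1)^2]-\tfrac{1}{\tilde n_T}\sum_i T_i\varepsilon_{i,N}(1)^2$ in terms of $\tfrac{1}{\tilde n_T}\sum_i T_i(\hat f_{1,N}-f_{1,N})^2(X_i)\le\tfrac{N}{\tilde n_T}\cdot\tfrac1N\sum_i(\hat f_{1,N}-f_{1,N})^2(X_i)=O_p(1)\cdot o_p(1)=o_p(1)$ by hypothesis~(a) and $N/\tilde n_T\to_p 1/\pi_T$; the remaining subsample average of the \emph{fixed} quantities $\varepsilon_{i,N}(1)^2$ converges in probability to $\E_{G_N}[\varepsilon_N(1)^2]$ by the finite-population Glivenko--Cantelli/law-of-large-numbers result of Appendix~\ref{App:EmpProc} (hypothesis~(b) controlling tails uniformly), which in turn tends to $\E_G[\varepsilon(1)^2]$ by part~(1). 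For the cross-term, the comonotonic coupling of the empirical marginals is $\int_0^1\hat G_N^{-1}(u)\hat F_N^{-1}(u)\,du$, so Cauchy--Schwarz in $L^2[0,1]$ gives $\big|\int\hat G_N^{-1}\hat F_N^{-1}-\int G^{-1}F^{-1}\big|\le W_2(\hat G_N,G)\,\|\hat F_N^{-1}\|_2+\|G^{-1}\|_2\,W_2(\hat F_N,F)$, with $\|\hat F_N^{-1}\|_2^2=\E_{\hat F_N}[\hat\varepsilon_N(0)^2]=O_p(1)$ from the marginal step. Thus it suffices to prove $W_2(\hat G_N,G)\to_p 0$ and $W_2(\hat F_N,F)\to_p 0$, which (by the standard characterization of $W_2$ convergence on the line) amounts to weak convergence in probability together with the already-established second-moment convergence; the same bound handles the countermonotonic coupling and hence $L^L$.

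The main obstacle, precisely the difficulty flagged before the Proposition, is establishing $\hat G_N\Rightarrow_p G$ when the indicator arguments $\hat\varepsilon_{i,N}(1)$ are random through both the fitted model and the sampling indicators $T_i$. I would fix a continuity point $\xi$ of $G$ and, for any $\eta>0$, sandwich $\mathbbm{1}\{\varepsilon_{i,N}(1)\le\xi-\eta\}-\mathbbm{1}\{|\hat\varepsilon_{i,N}(1)-\varepsilon_{i,N}(1)|>\eta\}\le\mathbbm{1}\{\hat\varepsilon_{i,N}(1)\le\xi\}\le\mathbbm{1}\{\varepsilon_{i,N}(1)\le\xi+\eta\}+\mathbbm{1}\{|\hat\varepsilon_{i,N}(1)-\varepsilon_{i,N}(1)|>\eta\}$. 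Averaging against $T_i/\tilde n_T$, the straddle mass satisfies $\tfrac{1}{\tilde n_T}\sum_i T_i\mathbbm{1}\{|\hat\varepsilon_{i,N}(1)-\varepsilon_{i,N}(1)|>\eta\}\le\tfrac{N}{\tilde n_T}\cdot\tfrac{1}{\eta^2}\cdot\tfrac1N\sum_i(\hat f_{1,N}-f_{1,N})^2(X_i)=o_p(1)$ by Markov and hypothesis~(a); the key point is that this deterministic domination by the \emph{full-population} average neutralizes the coupling between $T_i$ and the model error. The leftover averages $\tfrac{1}{\tilde n_T}\sum_i T_i\mathbbm{1}\{\varepsilon_{i,N}(1)\le\xi\pm\eta\}$ involve only fixed residuals, converge in probability to $G_N(\xi\pm\eta)$ by the finite-population Glivenko--Cantelli result, and $G_N(\xi\pm\eta)\to G(\xi\pm\eta)$ by Assumption~\ref{as:convergJoint}. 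Letting $\eta\to0$ and using continuity of $G$ at $\xi$ gives $\hat G_N(\xi)\to_p G(\xi)$, hence weak convergence in probability; combined with the second-moment convergence this yields $W_2(\hat G_N,G)\to_p0$ and closes the argument, with the vanishing-probability event $\tilde n_T=0$ absorbed by the convention in the footnote.
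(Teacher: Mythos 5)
Your proposal is correct, but it takes a genuinely different route from the paper's proof in Appendix \ref{Appendix:proof_prop1}. The paper metrizes weak convergence with the bounded-Lipschitz distance: it proves a finite-population P-Glivenko-Cantelli theorem for the BL class (Proposition \ref{Prop:GC} plus the entropy bound of Lemma \ref{lemma:BL}) to get $\hat G_N \Rightarrow G$ in probability, where the Lipschitz test functions absorb the random perturbation directly via $\lvert \phi(\hat\varepsilon_{i,N}) - \phi(\varepsilon_{i,N})\rvert \le \lvert \hat\varepsilon_{i,N} - \varepsilon_{i,N}\rvert$; it then transfers marginal convergence to the extremal joints through the L\'evy-metric inequality $d_L(\hat H_N^H, H^H) \le 2\big(d_L(\hat G_N,G) + d_L(\hat F_N,F)\big)$ and integrates to the limit using assumption (b) along almost-surely convergent subsequences. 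You instead establish $\hat G_N(\xi) \to_p G(\xi)$ pointwise at continuity points by an indicator sandwich whose straddle mass is killed by a Chebyshev bound, and you control the cross-moment through the $L^2[0,1]$ quantile representation and a $W_2$ estimate, invoking the characterization of $W_2$-convergence on the line as weak convergence plus second-moment convergence; this connects nicely to the computational formula in Section \ref{Sec:Calc} (Equation \ref{Eq.CalcBounds}). Notably, both arguments hinge on the same crucial device for the obstruction flagged before the Proposition: dominating the $T_i$-weighted model-error terms by the full-population average $\frac{1}{N}\sum_i (f_{q,N}(X_i)-\hat f_{q,N}(X_i))^2$, which neutralizes the dependence between the sampling indicators and the fitted models. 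What your route buys is elementarity and quantitativeness (a DKW-type Glivenko-Cantelli for the indicator class rather than BL-entropy machinery, and an explicit Wasserstein error bound). What the paper's route buys is that smoothing by Lipschitz functions sidesteps discontinuities entirely, whereas your argument needs three routine but unstated patches: restrict $\eta$ so that $\xi \pm \eta$ are continuity points of $G$ (possible since $G$ has at most countably many jumps, while weak convergence only gives $G_N(\xi\pm\eta)\to G(\xi\pm\eta)$ at such points); apply the finite-population GC result uniformly over the triangular array $\{\varepsilon_{i,N}\}$ (the explicit bounds in Appendix \ref{App:EmpProc} do permit this); and run the standard almost-sure-subsequence argument, over a countable dense set of continuity points, to upgrade weak-plus-moment convergence in probability to $W_2(\hat G_N, G) \to_p 0$. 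With those details filled in, your proof is a valid and arguably more transparent alternative.
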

 \begin{proof}
    The proof of Proposition \ref{Prop:ConsistencyOfSharpBounds} can be found in Appendix \ref{Appendix:proof_prop1}. Notice that the proof strategy provided by \citet{Aronow_Green_Lee_2014} does not generalize to covariate adjustment because it requires uniform convergence of the empirical distributions. Our proof proceeds by using weaker metrics - the Lévy distance and the bounded-Lipschitz distance - that metrize weak convergence and thus allow us to establish consistency without requiring convergence of CDFs on discontinuity points. This turns out to be crucial when extending to covariate adjustment because of the additional challenge of the indicator functions in the empirical distributions being random. Our proof builds on a P-Glivenko-Cantelli result for the empirical distributions $\hat G_N$ and $\hat F_N$  that we derive in Appendix \ref{App:EmpProc}.
\end{proof}
\begin{remark}
    By using weaker metrics than total variation that metrize weak convergence, the proof strategy developed here does not require the marginals $G_N$ and $F_N$ to converge pointwise to $G$ and $F$ at their respective discontinuity points. Following the proof developed here, this assumption in \citep{Aronow_Green_Lee_2014} Proposition 1 can thus be relaxed. Our work also extends their results to BREs. 
\end{remark}

\subsection{Calculation of the Estimators for the Sharp Variance Bounds}
\label{Sec:Calc}
In this section, we will show how to compute the cross-term between adjusted potential outcomes under the joint distribution of extremal dependence. These, in turn, allow for the computation of our sharp variance bound estimators. 
The sharp bounds on the second moments in Lemma \ref{lemma:Hoeffding} correspond to random variables that are co- or countermonotonic. Defining the left-continuous inverse of CDFs as $F^{-1}(x) = \inf\{y : F(y) \geq x\}$, we can express co-monotonicity of the sample-adjusted potential outcomes as $(\hat\varepsilon_{N}(1),\hat\varepsilon_{N}(0)) \sim (\hat G_N^{-1}(U),\hat F_N^{-1}(U))$ and counter-monotonicity as $(\hat\varepsilon_{N}(1),\hat\varepsilon_{N}(0)) \sim (\hat G_N^{-1}(U),\hat F_N^{-1}(1-U))$, where $U \sim \text{Unif}[0,1]$ is uniformly random \citep{Puccetti_Wang_2015}. Hence
\begin{align}\nonumber
    &\E_{\hat H^H_N}[\hat\varepsilon_{N}(1)\hat\varepsilon_{N}(0)] = \int_{0}^1 \hat G_N^{-1}(u)\hat F_N^{-1}(u)du  \hspace{1cm}\text{and}\\ \nonumber
    &\E_{\hat H^L_N}[\hat\varepsilon_{N}(1)\hat\varepsilon_{N}(0)] = \int_{0}^1 \hat G_N^{-1}(u) \hat F_N^{-1}(1-u)du.
\end{align}
Let $\hat\varepsilon_{(1)1}^{obs},\dots,\hat \varepsilon_{(\tilde n_T)1}^{obs}$ and $\hat \varepsilon_{(1)0}^{obs},\dots,\hat \varepsilon_{(\tilde n_{\bar T})0}^{obs}$ be the ordered sample-adjusted outcomes corresponding to $T_i = 1$ and $\bar T_i = 1$, respectively, and let $\{p_0,\dots p_P\}$ be the ordered and distinct elements of $(\cup_{i=0}^{\tilde n_T}\frac{i}{\tilde n_T} )\bigcup (\cup_{i=0}^{\tilde n_{\bar T}}\frac{i}{\tilde n_{\bar T}})$. Then 
\begin{align}\nonumber
\label{Eq.CalcBounds}
     \E_{ \hat H^H_N}[\hat\varepsilon_{N}(1)\hat\varepsilon_{N}(0)] = \sum_{i=1}^P (p_i - p_{i-1})\hat \varepsilon_{(\lceil \tilde n_{T} p_i \rceil)1}^{obs}\hat \varepsilon_{(\lceil \tilde n_{\bar T} p_i \rceil)0}^{obs}   \, \, \, \text{and}\\ 
      \E_{\hat H^L_N}[\hat\varepsilon_{N}(1)\hat\varepsilon_{N}(0)] = \sum_{i=1}^P (p_i - p_{i-1})\hat\varepsilon_{(\lceil \tilde n_T p_i \rceil)1}^{obs}\hat\varepsilon_{(\lceil \tilde n_{\bar T} p_{P+1-i} \rceil)0}^{obs} .
\end{align}
 We provide code to calculate these sharp bounds in the form of an \href{https://github.com/JonasMikhaeil/SharpVarianceBounds}{R package}.

 \subsection{Application to Linear Regression Adjustment}
 \label{Sec:LinRegAdj}
One common way of adjusting for covariates is choosing $\mathcal{F}$ to be the class of all linear outcome models (i.e., $f_q(X_i) = \alpha_q + \beta_q^\top X_i$) and to estimate  $\hat f_1$ and  $\hat f_0$ on all available data in the respective treatment arms, i.e., choosing $(T_i,\bar{T} _i)= (Z_i,1-Z_i)$. While this adjustment comes with finite-sample bias \citep{Freedman_2008}, \citet{Lin_reg_adj} showed that with large samples, linear covariate adjustment (with full treatment-covariate interactions) is at least as efficient as a difference-in-means estimator, which makes no use of the covariates. 
For linear outcome models, the treatment effect estimator in Equation \ref{Eq:GenRegAdjOracle} is equivalent to
\begin{align}
\label{Eq:RegAdj}
    \hat \tau_N(\beta_1,\beta_0) = \frac{1}{n_1} \sum_{i=1}^N Z_i \big(Y_i(1) - \beta_1^\top X_i \big) - 
    \frac{1}{n_0} \sum_{i=1}^n (1-Z_i)\big(Y_i(0) - \beta_0^\top X_i).
\end{align}
\citet{Li_Ding_CLT} show that 
the variance of $\hat \tau(\beta_1,\beta_0)$ is minimized by choosing $\beta_1$ and $\beta_0$ as the OLS coefficients of $X_i$ in the linear projection of the potential outcomes $Y(1)$ and $Y(0)$ onto $(1,X_i)$:
\begin{align}\nonumber
(\beta_{q,N},\gamma_{q,N}) = \underset{\beta_q,\gamma_q}{\text{argmin}}\sum_{i=1}^N\bigg(Y_i(q) - \gamma_q - \beta_q^\top X_i \bigg)^2 \, \, \, \text{for} \, \, q = 0,1.
\end{align}
These optimal coefficients remain unknown, and we have to make do with the estimated least-squares coefficients
\begin{align}\nonumber
(\hat \beta_{1,N},\hat \gamma_{1,N}) \, &=\,  \underset{\beta_1,\gamma_1}{\text{argmin}}\sum_{i=1}^N Z_i \bigg(Y_i(1) - \gamma_1 - \beta_1^\top X_i \bigg)^2 \, \, \, \text{and} \\\nonumber
(\hat \beta_{0,N},\hat \gamma_{0,N}) \, &=\,  \underset{\beta_0,\gamma_0}{\text{argmin}}\sum_{i=1}^N (1-Z_i) \bigg(Y_i(0) - \gamma_0 - \beta_0^\top X_i \bigg)^2.
\end{align}
 While using the estimates $\hat \beta_{1,N}$ and $\hat \beta_{0,N}$ introduces finite-sample bias of order $o_p(1/\sqrt{N})$ \citep{Lin_reg_adj}, \citet{Li_Ding_CLT} show that $\hat \tau_N(\beta_{1,N},\beta_{0,N})$ and $\hat \tau_N(\hat \beta_{1,N},\hat \beta_{0,N})$ have the same asymptotic distribution. Here the population-adjusted potential outcomes are $\varepsilon_{i,N}(q) = Y_i(q) - (\alpha_{q,N}+\beta_{q,N}^\top X_i)$, and the variance of $\hat \tau(\beta_{1,N},\beta_{0,N})$ can be expressed as
 \begin{align}
 \label{Eq:VarLinAdj}
     \text{Var}(\hat \tau(\beta_{1,N},\beta_{0,N})) 
     \, &= \, \frac{1}{N} \bigg(\frac{n_0}{n_1}\frac{1}{N}\sum_{i=1}^N \varepsilon_{i,N}(1)^2 + \frac{n_1}{n_0} \frac{1}{N}\sum_{i=1}^N\varepsilon_{i,N}(0)^2 + 2 \frac{1} {N}\sum_{i=1}^N\varepsilon_{i,N}(1)\varepsilon_{i,N}(0) \bigg). 
 \end{align}
 Typically, the (asymptotically) conservative estimator 
 \begin{align}\nonumber
     \widehat{\text{Var}}(\hat \tau(\beta_{1,N},\beta_{0,N})) =  \frac{1}{n_1(n_1 -1)}\sum_{i=1}^N Z_i \hat \varepsilon_{i,N}(1)^2 +  \frac{1}{n_0(n_0-1)}\sum_{i=1}^N (1-Z_i)\hat \varepsilon_{i,N}(0)^2,
 \end{align}
 where $\hat \varepsilon_{i,N}(q) = Y_i(q) - (\hat \alpha_{q,N} + \hat \beta_{q,N}^\top X_i)$ are the sample-adjusted outcomes, is used. \citet{Lin_reg_adj} showed that this conservative variance estimator can be conveniently approximated by Huber-White (EHW) robust standard errors. 

 Additionally, it can be shown that post-stratification \citep{Miratrix_Sekhon_Yu_2013} based on discrete covariates $C_i$ with $K$ categories is equivalent to linear regression adjustment with covariates $X_i = \big (\mathbbm{1}\{C_i = 1\} - \pi_1,\dots,\mathbbm{1}\{C_i = K-1\} - \pi_{K-1}\big)$, where $\pi_k$ is the proportion of units within stratum $k$ \citep{ding2023course}.

 In practice, Lin's estimator can be calculated by linearly regressing $Y^{obs}$ on $(1,Z,X,Z\times X)$ after mean-centering $X$. Lin's estimator $\hat \tau(\hat \beta_{1,N},\hat \beta_{0,N})$ of the average treatment effect is then given by the OLS coefficient of $Z$. 

The following Corollary applies Proposition \ref{Prop:ConsistencyOfSharpBounds} to the case of linear regression adjustment.
 \begin{corollary}
 \label{cor:linReg}
 Let Assumption \ref{as:RandomAssignment}.2 and \ref{as:convergJoint} hold. Further, assume the following classical regularity conditions:
 \begin{enumerate}[label=(\alph*)]
 \item Finite population variances and covariances among potential outcomes, and covariates have limiting values.
    \item Finite population covariance of covariates $S_X^2$, and its limits are nonsingular.
    \item The potential outcomes and covariates have bounded fourth moments
\begin{align}\nonumber
      \frac{1}{N} \sum_{i=1}^N Y_i(0)^4 \leq L, \, \,  \frac{1}{N} \sum_{i=1}^N Y_i(1)^4 \leq L\text{ ,and }   \frac{1}{N} \sum_{i=1}^N x_{ki}^4 \leq L.
    \end{align}
\end{enumerate}
    Then the following holds
\begin{align}\nonumber
    \frac{(\hat \tau_N(\hat \beta_1,\hat \beta_0) - \tau)}{\sqrt{\gamma \hat V^H_N}} \overset{d}{\rightarrow} \mathcal{N}(0,1)
\end{align}
for the asymptotic distribution of Lin's estimator, where $\gamma \leq 1$.
  \end{corollary}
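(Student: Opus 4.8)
The plan is to combine the finite-population central limit theorem of \citet{Li_Ding_CLT} for Lin's estimator with the consistency of the sharp upper-bound estimator from Proposition \ref{Prop:ConsistencyOfSharpBounds}, gluing the two together with Slutsky's theorem. Write $V_N^{\star} := \text{Var}(\hat \tau_N(\beta_{1,N},\beta_{0,N}))$ for the true (deterministic) variance of the oracle-coefficient version of Lin's estimator given in Equation \ref{Eq:VarLinAdj}. The first step is to invoke \citet{Li_Ding_CLT}: under regularity conditions (a)--(c), the estimator studentized by its own true variance is asymptotically standard normal,
\begin{align}\nonumber
\frac{\hat \tau_N(\hat \beta_1,\hat \beta_0) - \tau}{\sqrt{V_N^{\star}}} \overset{d}{\rightarrow} \mathcal{N}(0,1),
\end{align}
where I use that \citet{Li_Ding_CLT} show $\hat \tau_N(\hat \beta_{1,N},\hat \beta_{0,N})$ and $\hat \tau_N(\beta_{1,N},\beta_{0,N})$ share the same asymptotic distribution, so that the finite-sample bias from estimating the coefficients is asymptotically negligible.

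Second, I would verify that the hypotheses of Proposition \ref{Prop:ConsistencyOfSharpBounds} hold in the linear-adjustment setting, so that $\hat V_N^H$ is available. Assumption \ref{as:RandomAssignment}.2 holds with $(T_i,\bar T_i)=(Z_i,1-Z_i)$, and Assumption \ref{as:convergJoint} is assumed directly. The $o_p(1)$ consistency condition (a) of the Proposition reduces to $\frac{1}{N}\sum_{i=1}^N\big((\hat \beta_{q,N}-\beta_{q,N})^\top X_i\big)^2 = (\hat \beta_{q,N}-\beta_{q,N})^\top S_X^2 (\hat \beta_{q,N}-\beta_{q,N})$, which is $o_p(1)$ because the OLS estimates converge to the population projection coefficients under the nonsingularity of $S_X^2$ in (b) and the moment limits in (a); the uniform square-integrability condition (b) of the Proposition follows from the Lindeberg-type conditions in (c) together with the finite limiting second moments in (a). Proposition \ref{Prop:ConsistencyOfSharpBounds} then gives $\hat V_N^H - V_N^H = o_p(1/N)$, and since $N V_N^H$ converges to the deterministic limit $\nu_H$ appearing in the Proposition, $N\hat V_N^H \overset{p}{\rightarrow}\nu_H$.

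Third, I would identify the deflation factor and finish by Slutsky. Because $V_N^H$ is obtained by replacing the unidentified cross-term $\frac{1}{N}\sum_i \varepsilon_{i,N}(1)\varepsilon_{i,N}(0)$ in Equation \ref{Eq:VarLinAdj} by its Fréchet--Hoeffding upper value (Lemma \ref{lemma:Hoeffding}), we have $V_N^{\star}\leq V_N^H$ for every $N$; passing to the limits $N V_N^{\star}\to\nu_{\star}$ and $N V_N^H\to\nu_H$ yields $\gamma := \nu_{\star}/\nu_H \leq 1$. I would then write the target statistic as the product
\begin{align}\nonumber
\frac{\hat \tau_N(\hat \beta_1,\hat \beta_0) - \tau}{\sqrt{\gamma\hat V_N^H}} = \frac{\hat \tau_N(\hat \beta_1,\hat \beta_0) - \tau}{\sqrt{V_N^{\star}}}\cdot\sqrt{\frac{V_N^{\star}}{\gamma\hat V_N^H}},
\end{align}
and observe that the second factor equals $\sqrt{(N V_N^{\star})/(\gamma\cdot N\hat V_N^H)}\overset{p}{\rightarrow}\sqrt{\nu_{\star}/(\gamma\nu_H)}=1$ by the previous steps and the continuous mapping theorem (using $\nu_H>0$). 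Slutsky's theorem then delivers the claimed standard-normal limit.

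The main obstacle to watch is that both the point estimator and the variance estimator are built from the \emph{same} estimated residuals $\hat \varepsilon_{i,N}(q)$, so the randomness of $\hat \beta_{q,N}$ enters twice. On the point-estimator side this is already resolved by \citet{Li_Ding_CLT}; on the variance side it is precisely the difficulty of random arguments inside the indicator functions of $\hat G_N$ and $\hat F_N$ that Proposition \ref{Prop:ConsistencyOfSharpBounds} was designed to absorb. The real work is therefore confined to verifying the Proposition's regularity hypotheses from conditions (a)--(c); once that bookkeeping is done, the remainder is a routine Slutsky and continuous-mapping composition. A secondary point to handle carefully is nondegeneracy ($\nu_H>0$), which is implied by the standard CLT regularity conditions and is needed both for $\gamma$ to be well defined and for the second factor above to converge.
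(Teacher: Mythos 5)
Your proposal is correct and follows essentially the same route as the paper's proof: invoke the \citet{Li_Ding_CLT} (Example 9) CLT for Lin's estimator, verify conditions (a) and (b) of Proposition \ref{Prop:ConsistencyOfSharpBounds} from the regularity conditions (the paper, like you, gets $o_p(1)$ model consistency from $\hat\beta_{q,N}-\beta_{q,N}=o_p(1)$ and $\hat\gamma_{q,N}-\gamma_{q,N}=o_p(1)$, and uniform square-integrability from the Lindeberg-type max condition $\frac{1}{N}\max_i \varepsilon_{i,N}(z)^2\rightarrow 0$), and conclude via Slutsky with $\gamma$ the ratio of the limiting true variance to the limiting sharp bound. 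Your write-up is if anything slightly more explicit than the paper's (spelling out the Slutsky factorization and flagging nondegeneracy $\nu_H>0$), with only the cosmetic omission of the intercept term $(\hat\gamma_{q,N}-\gamma_{q,N})$ in the displayed reduction of condition (a), which your appeal to full OLS coefficient consistency covers anyway.
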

  \begin{proof}
  Under these assumptions, \citet{Lin_reg_adj} establishes a CLT for Lin's estimator.
      The regularity conditions imply both Assumptions (a) and (b) of Proposition \ref{Prop:ConsistencyOfSharpBounds}. The corollary thus follows immediately. 
  \end{proof}
  \begin{remark}
      $V^H$ is the sharp upper bound given only information on the marginals. Consequently, $\hat \tau_N(\hat \beta_1,\hat \beta_0) \pm z_{1-\alpha/2} \sqrt{\hat V_N^H}$ is the asymptotically narrowest conservative Wald-type confidence interval with the nominal coverage.
  \end{remark}

When using linear regression to calculate Lin's estimator, the sharp variance bounds (Equation \ref{Eq:estmVarBoundsReg}) can be conveniently calculated based on the regression residuals $r$ from the regression of $Y^{obs}$ on $(1,Z,X,Z \times X)$. The residuals $r$ can be directly plugged into Equation \ref{Eq.CalcBounds}. 
\subsection{Decorrelation method for general regression adjustment}
\label{Sec:DecorRegAdj}
To illustrate how our sharp bounds apply to general regression adjustment beyond linear regression, we focus on the decorrelation estimator proposed by
\citet{Su_Mou_Ding_Wainwright_2023}. Their decorrelation method replaces the original treatment variables $Z_i$ and $1-Z_i$ with a decorrelation sequence $\{T_i,\bar T_i , R_i, \bar R_i\}$, see \citep{Su_Mou_Ding_Wainwright_2023} Lemma 1. The random variables $R$ and $\bar R$ determine on which units $\hat f_R \equiv \hat f_1$ and $\hat f_{\bar R} \equiv \hat f_0$ are estimated,
and $T_i$ and $\bar T_i$ replace $Z_i$ and $1-Z_i$ in the calculation of the estimator. Their oracle DC-estimator
\begin{align}\nonumber
    \hat \tau_{N}^{\text{dc.oracle}} := \frac{1}{n_M} \sum_{i=1}^N T_i(Y_i(1) - f_1(X_i)) - \frac{1}{n_{\bar T}} \sum_{i=1}^N \bar{T}_i(Y_i(0) - f_0(X_i)) + \frac{1}{N} \sum_{i=1}^N ( f_1(X_i) -  f_0(X_i)),
\end{align}
has the following variance:
\begin{align}\nonumber
      \text{Var}(\hat \tau_N^{\text{dc.oracle}}) = \frac{1}{N} \bigg(\frac{1-\pi_T}{\pi_T} \frac{1}{N}\sum_{i=1}^N \varepsilon_i(1)^2 + \frac{1-\pi_{\bar{T}}}{\pi_{\bar{T}}}\frac{1}{N} \sum_{i=1}^N \varepsilon_i(0)^2 + \frac{2}{N} \sum_{i=1}^N\varepsilon_i(1) \varepsilon_i(0)\bigg).
\end{align}
Their decorrelation estimator is
\begin{align}\nonumber
    \hat \tau_{N}^{\text{dc}} := \frac{1}{n_T} \sum_{i=1}^N T_i(Y_i(1) - \hat f_R(X_i)) - \frac{1}{n_{\bar T}} \sum_{i=1}^N \bar{T}_i(Y_i(0) - \hat f_{\bar{R}}(X_i)) + \frac{1}{N} \sum_{i=1}^N (\hat f_R(X_i) - \hat f_{\bar{R}}(X_i))
\end{align}
and they propose the following identified, conservative variance estimator based on Neyman's upper-bound 
\begin{align}\nonumber
\hat V_N^{dc} = \frac{1}{N\pi_T^2} \sum_{i=1}^N T_i (Y_i(1)-\hat{f}_R(X_i))^2 + \frac{1}{N\pi_{\bar T}^2} \sum_{i=1}^N \bar{T}_i (Y_i(0)-\hat{f}_{\bar R}(X_i))^2.
\end{align}
Under the conditions of their proposition guaranteeing asymptotic normality (\citep{Su_Mou_Ding_Wainwright_2023} Proposition 1), our Proposition \ref{Prop:ConsistencyOfSharpBounds} provides sharp variance bounds:
\begin{corollary} 
Let Assumption \ref{as:convergJoint} hold. Further, assume
\begin{enumerate} [label=(\alph*)]
    \item that the quadruple $\{T,\bar{T},R,\bar{R}\}$ is a decorrelation sequence as in Lemma 1 in \citep{Su_Mou_Ding_Wainwright_2023}
    \item $\liminf_{N \rightarrow \infty} \text{Var}(\hat \tau_N^{dc.oracle}) > 0 $
    \item The fourth moment of the population-adjusted outcomes is bounded: $\frac{1}{N}\sum_{i=1}^N \varepsilon_{i,N}^4(q) \leq C$ 
    for $q \in \{0,1\}$
    \item  The treatment probability remains uniformly bounded away from 0, i.e., $\pi_T\in [\alpha,1-\alpha]$ for some $\alpha \in (0,1)$ independent of N. 
    \item We have $o_p(1)$ consistency of the outcome models, i.e.,
    $\big( \frac{1}{N} \sum_{i=1}^N (f_1(X_i) - \hat f_R(X_i))^2 \big)^{1/2}= o_p(1)$ and $\big( \frac{1}{N} \sum_{i=1}^N (f_0(X_i) - \hat f_{\bar R}(X_i))^2 \big)^{1/2}= o_p(1)$.
\end{enumerate}
Then 
\begin{align}\nonumber
    \frac{(\hat \tau^{dc}_N - \tau )}{\sqrt{\gamma \hat V_N^H}} \overset{d}{\rightarrow} \mathcal{N}(0,1)
\end{align}
with $\gamma \leq 1$.
\end{corollary}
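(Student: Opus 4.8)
The plan is to mirror the proof of Corollary \ref{cor:linReg}: invoke the asymptotic normality already available for the decorrelation estimator, verify that hypotheses (a)--(e) supply exactly the assumptions needed by Proposition \ref{Prop:ConsistencyOfSharpBounds}, and then combine the two through Slutsky's theorem. Concretely, conditions (a)--(e) are precisely those under which \citet[Proposition 1]{Su_Mou_Ding_Wainwright_2023} establishes a central limit theorem for the decorrelation estimator; together with the $o_p(1)$ consistency in (e), which lets one replace $\hat\tau_N^{\text{dc}}$ by its oracle counterpart $\hat\tau_N^{\text{dc.oracle}}$ at asymptotically negligible cost, this yields
\[
\frac{\hat\tau_N^{\text{dc}} - \tau}{\sqrt{V_N}} \overset{d}{\rightarrow} \mathcal{N}(0,1), \qquad V_N := \text{Var}(\hat\tau_N^{\text{dc.oracle}}).
\]

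First I would identify the decorrelation indicators $(M_i,\bar M_i)$ with the generic sampling indicators $(T_i,\bar T_i)$ of Proposition \ref{Prop:ConsistencyOfSharpBounds} and check that they satisfy Assumption \ref{as:RandomAssignment}. This is the content of the decorrelation property in \citet[Lemma 1]{Su_Mou_Ding_Wainwright_2023} invoked in (a): the sequence is constructed so that the indicators carry the Bernoulli covariance structure $\text{Cov}(M_i,M_j)=\pi_M(1-\pi_M)\delta_{ij}$ (and analogously for $\bar M$ and the cross term), placing us in case 1 of Assumption \ref{as:RandomAssignment} with $\pi_T=\pi_M$ and $\pi_{\bar T}=\pi_{\bar M}$. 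Next I would discharge the two remaining hypotheses of Proposition \ref{Prop:ConsistencyOfSharpBounds}: its assumption (a) is literally condition (e), while its assumption (b), uniform square-integrability, follows from the uniform fourth-moment bound (c) by a de la Vallée-Poussin / Markov argument, since on $\{\varepsilon_{i,N}(q)^2\geq\beta\}$ one has $\varepsilon_{i,N}(q)^2\leq\varepsilon_{i,N}(q)^4/\beta$, so the tail average is at most $C/\beta$, uniformly in $N$, and tends to $0$ as $\beta\to\infty$. With Assumptions \ref{as:RandomAssignment} and \ref{as:convergJoint} and hypotheses (a),(b) in hand, Proposition \ref{Prop:ConsistencyOfSharpBounds} applies and supplies both the limit of $NV_N^H$ and the rate $\hat V_N^H - V_N^H = o_p(1/N)$; since $NV_N^H$ converges to a positive constant, this gives $\hat V_N^H/V_N^H \overset{p}{\rightarrow} 1$.

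Finally I would combine the pieces. Because $V_N^H$ is the sharp \emph{upper} bound on $V_N$, the ratio $\gamma_N := V_N/V_N^H$ lies in $(0,1]$; under Assumption \ref{as:convergJoint} and the fourth-moment bound the scaled oracle variance $NV_N$ (read off from Equation~\ref{Eq:VarOracle}) converges, and condition (b) forces the limit to be strictly positive, so $\gamma_N = (NV_N)/(NV_N^H)$ converges to a deterministic $\gamma\in(0,1]$, which yields the claimed $\gamma\leq 1$. Writing
\[
\frac{\hat\tau_N^{\text{dc}}-\tau}{\sqrt{\gamma \hat V_N^H}} = \frac{\hat\tau_N^{\text{dc}}-\tau}{\sqrt{V_N}}\cdot\sqrt{\frac{V_N}{\gamma\,\hat V_N^H}},
\]
the first factor converges in distribution to $\mathcal{N}(0,1)$ and the second converges in probability to $1$ (since $V_N/V_N^H=\gamma_N\to\gamma$ and $\hat V_N^H/V_N^H\overset{p}{\to}1$), so Slutsky's theorem delivers the conclusion.

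The step I expect to require the most care is matching the decorrelation construction to the framework of Proposition \ref{Prop:ConsistencyOfSharpBounds}: one must confirm that replacing $(Z_i,1-Z_i)$ by $(M_i,\bar M_i)$ leaves the oracle variance in exactly the form of Equation~\ref{Eq:VarOracle} that the proposition is stated for, and that the CLT of \citet{Su_Mou_Ding_Wainwright_2023} is normalized by this oracle variance rather than by the (feasible) variance of $\hat\tau_N^{\text{dc}}$ itself, the two being reconciled only through condition (e). Everything else, namely the fourth-moment-to-uniform-integrability implication and the Slutsky combination, is routine bookkeeping; the genuinely delicate randomness of the indicator arguments in $\hat G_N,\hat F_N$ has already been absorbed into Proposition \ref{Prop:ConsistencyOfSharpBounds}.
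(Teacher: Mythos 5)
Your proposal is correct and takes essentially the same route as the paper's own proof: invoke the CLT of \citet{Su_Mou_Ding_Wainwright_2023}, observe that condition (a) yields Assumption \ref{as:RandomAssignment}, that condition (e) is exactly hypothesis (a) of Proposition \ref{Prop:ConsistencyOfSharpBounds}, and that the fourth-moment bound (c) gives uniform square-integrability, then conclude via Slutsky. You merely spell out what the paper compresses into ``immediately implies'' and ``the corollary follows'' --- namely the Markov-type tail estimate $\varepsilon_{i,N}(q)^2 \leq \varepsilon_{i,N}(q)^4/\beta$ on $\{\varepsilon_{i,N}(q)^2 \geq \beta\}$ and the bookkeeping with $\gamma_N = V_N/V_N^H \in (0,1]$ --- which is a faithful, slightly more explicit rendering of the same argument.
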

\begin{proof}
    The asymptotic normality is proven in \citep{Su_Mou_Ding_Wainwright_2023}. 
    Condition (a) implies that our Assumption \ref{as:RandomAssignment} about the random indicators holds. 
    Condition (c) immediately implies that $\varepsilon_N(1)$ and $\varepsilon_N(0)$ are uniformly square-integrable. Thus, the conditions of Proposition \ref{Prop:ConsistencyOfSharpBounds} are met, and the corollary follows.
\end{proof}
\begin{remark}
      $V^H$ is the sharp upper bound given only information on the marginals. Consequently, $ \hat \tau^{dc}_N \pm z_{1-\alpha/2} \sqrt{\hat V_N^H}$ is the asymptotically narrowest conservative Wald-type confidence interval for the decorrelation estimator.
  \end{remark}
\section{Simulation Results}
\label{Sec:SimRes}
In this section, we perform a simulation study to investigate when sharp variance bounds for linear regression adjustment generate more accurate estimates of this estimator's true sampling variance. 
From Section \ref{Sec:FinitPopSec}, we know that the conventional variance estimator is consistent in the case of  constant treatment effects.  When the sample-adjusted potential outcomes are linearly dependent, the Cauchy-Schwarz inequality is exact.
In both cases, sharp variance bounds cannot provide any benefits. 
The following simulation avoids these extremes and illustrates a simple case in which sharp variance bounds outperform the alternative variance estimators.

Let $p_i \overset{i.i.d}{\sim} \text{Bernoulli}(\theta)$, $e_i \overset{i.i.d} \sim \mathcal{N}(0,1) $
, and $Y_i(0) = \alpha_0 + \beta_0 x_i + e_i$. We then define 
\begin{align}\nonumber
    Y_i(1) = \begin{cases}
        Y_i(0)  \hspace{1.5cm} &\text{if } p_i = 0 \\
        10+0.5e_i \, \, \, &\text{else}.
    \end{cases}
\end{align}
The parameter $\theta$ can be thought of as a heterogeneity parameter. In keeping with the finite-population framework, we only draw the potential outcomes $Y(0)$ and $Y(1)$ once. The randomness in the subsequent simulations stems purely from randomly drawing the treatment indicator $Z$, i.e., $Y_i^{obs} = Z_i Y_i(1) + (1-Z_i)Y_i(0)$ inherits its randomness only from $Z_i$. The parameter $\theta$ moves the simulation between the extremes of the sharp null ($\theta = 0$) and linear dependence between the adjusted potential outcomes ($\theta = 1$). 
For intermediate values of $\theta$, this data-generating process yields heterogeneous effects with a non-linear relationship between the adjusted potential outcomes $\varepsilon(1)$ and $\varepsilon(0)$.
\begin{figure}
     \centering
     \begin{subfigure}[b]{0.49\textwidth}
         \centering
    \includegraphics[width=\textwidth]{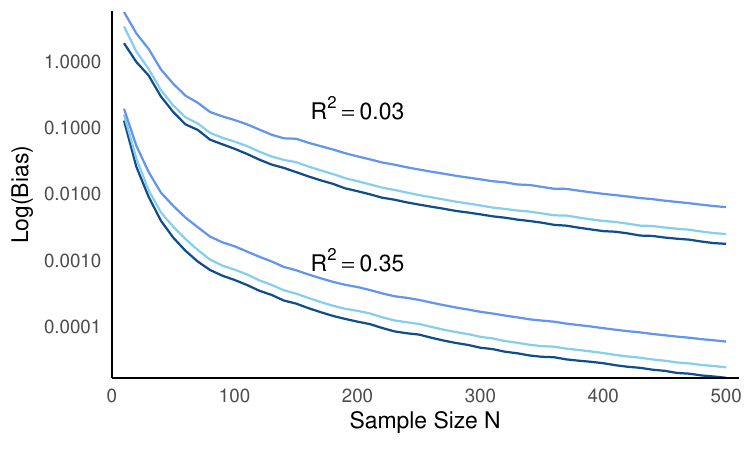}

     \end{subfigure}
     \hfill
     \begin{subfigure}[b]{0.49\textwidth}
         \centering
         \includegraphics[width=\textwidth]{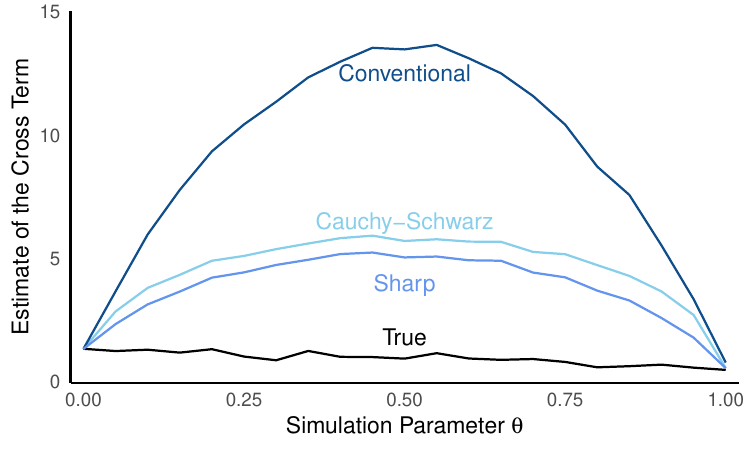}

     \end{subfigure}
     \hfill
        \caption{Simulation Results. (Left) Comparison of the small-sample bias of the different variance estimators under the sharp null. We plot the log bias of the conventional, Cauchy-Schwarz, and sharp upper bound variance estimators for different levels of informativeness (in terms of R$^2$) of the covariates. The bias for all variance estimators shows the same scaling with sample size. (Right) Impact of the heterogeneity parameter $\theta$ (horizontal axis) on the Conventional, Cauchy-Schwarz, and sharp upper bound estimates of the cross-term between the population-adjusted potential outcomes (vertical axis).}
        \label{fig:simulationResults}
\end{figure}
Figure \ref{fig:simulationResults} (Left) shows the bias (on a logarithmic scale) of the conventional, the Cauchy-Schwarz, and our sharp bound estimator of the variance as a function of sample size under the sharp null. 
While all three variance estimators show small-sample bias, this simulation shows that the bias of our sharp variance estimator is of similar order as the bias of the conventional (EWH robust) variance estimator and diminishes quickly. 

Figure \ref{fig:simulationResults} (Right) shows estimates of the unidentified covariance in Equation \ref{Eq:VarLinAdj}. For increasing values of $\theta$, the true correlation between the adjusted potential outcomes $\varepsilon(1)$ and $\varepsilon(0)$ decreases from unity to zero. As expected, there are no benefits when using sharp variance bounds for either of the limiting cases (the sharp null $\theta = 0$ and linear dependence $\theta =1$).  For intermediate values of $\theta$, however, there are non-linear heterogeneous effects and clear benefits of the sharp variance estimator. Q-Q plots of the sample-adjusted outcomes in treatment and control offer a practical tool to assess potential benefits of sharp variance bounds beyond the Cauchy-Schwarz variance estimator. In Appendix \ref{App:BenSharpBounds} Figure 2, we show the Q-Q plot for simulations with $\theta=0.5$. We also discuss the conditions under which sharp bounds offer gains in precision. 

\section{Empirical Examples}
In this section, we apply our approach to experiments from an array of different fields. In all three examples, prognostic covariates improve the precision with which the ATE is estimated.  Using our sharp variance bounds for the regression-adjusted treatment effect estimator reduces the estimated variance in all three examples, highlighting the potential benefits of applying sharp bounds.
\subsection{Psychology Example: The Effect of Entertainment-Education on Prejudice}
\begin{figure}
     \centering
     \begin{subfigure}[b]{0.49\textwidth}
         \centering
    \includegraphics[width=\textwidth]{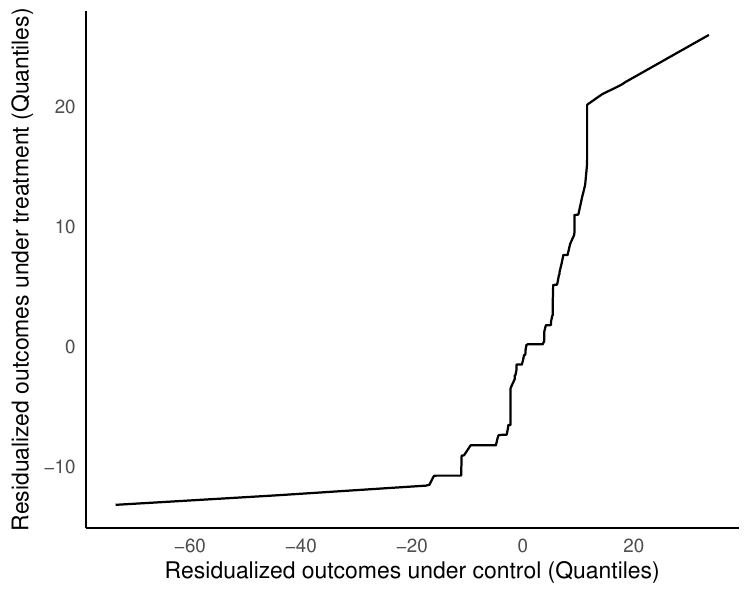}

     \end{subfigure}
     \hfill
     \begin{subfigure}[b]{0.49\textwidth}
         \centering
         \includegraphics[width=\textwidth]{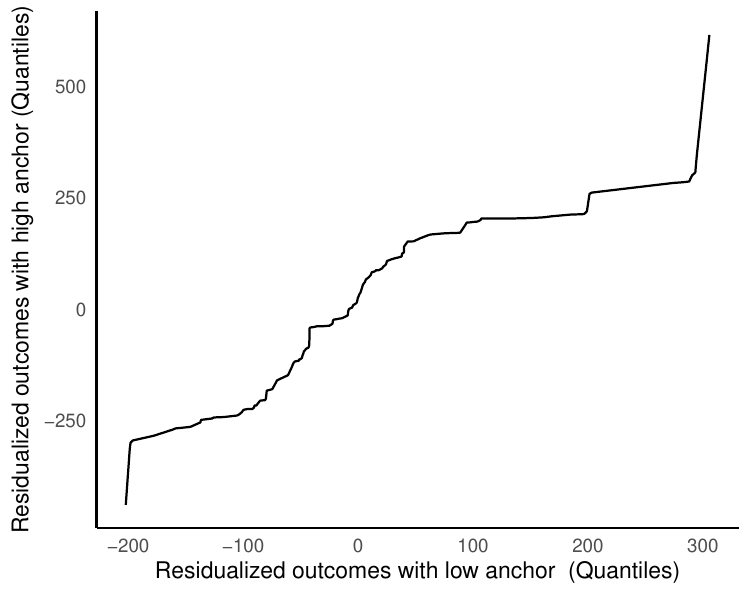}

     \end{subfigure}
     \hfill
        \caption{QQ-plots for evaluating the gains from applying sharp bounds. (Left) QQ-plot for the residualized treated and untreated outcomes from \citep{Murrar_18}. (Right) QQ-plot for the residualized outcomes in the low and high anchoring condition from \cite{Lee_22}. Both QQ-plots reveal marked deviations from the diagonal, indicating that sharp variance bounds will be smaller than the conventional variance estimator.}
        \label{fig:QQplots_experiments}
\end{figure}
Laboratory experiments in psychology commonly employ designs in which subjects' beliefs and attitudes are measured prior to exposure to the randomly assigned treatment; shortly after exposure to treatment or placebo interventions, subjects' beliefs and attitudes are measured again. Such designs naturally lend themselves to covariate adjustment because the baseline survey contains a highly prognostic measure, the question that is later used to assess outcomes.

Experiment 1 reported in \citep{Murrar_18} is typical of this type of experimental design. The study involves exposing subjects to six episodes of a TV sitcom, either \emph{Little Mosque on the Prairie} (N=87) or \emph{Friends} (N=86). The authors show that these series are similar in many respects, but \emph{Little Mosque on the Prairie} exposes audiences to likable Arab characters, whereas none of the episodes of \emph{Friends} shows any nonwhite characters.  Prior to viewing either the treatment or placebo drama, subjects give their evaluations of several social groups, including Arabs.  The specific question asks, ``Please indicate how warmly or coldly you feel toward the groups listed below...Arab people'' and elicits a rating from 0 to 100.  The same question is asked at endline.  In the placebo group, the correlation between baseline and endline ratings is 0.83. A regression of the post-treatment rating on the pre-treatment rating and a binary treatment indicator scored 1 if subjects were exposed to \emph{Little Mosque on the Prairie} yields an estimated treatment effect of 5.74 with a (conventional) standard error of 1.57.  Randomization inference rejects the sharp null hypothesis of no effect at the $p < 0.001$ level.

In order to get a sense of why sharp bounds might be helpful in this application, we examine the QQ plot for the (residualized) treated and untreated outcomes, see Figure \ref{fig:QQplots_experiments} (Left).  The apparent deviation from the 45 degree line is suggestive of heterogeneous effects.  Indeed, when we apply sharp bounds to the covariate-adjusted estimate of the ATE in Table \ref{tab:results_applc_murrar}, we see an appreciable reduction in variance vis-a-vis the conventional estimator and Neyman's Cauchy-Schwarz estimator. The data from \citep{Murrar_18} are publicly available as part of the \texttt{experimentr} R package \citep{Tuncer2022_experimentr}.
\begin{table}[!t]
    \centering
    \begin{tabular}{rcc|cc}
         & Unadjusted & & Adjusted &  \vspace{0.1cm}
         \\\hline 
         & Variance Estimate & Improvement & Variance Estimate& Improvement \\ 
         Conventional & 9.29& 2.4\% & 2.38 &15.4\% \\
         Cauchy-Schwarz&9.24 & 1.9\% & 2.30 & 12.3\%\\
         Sharp &9.06 & - & 2.02 & -\\
    \end{tabular}
    \caption{Variance estimates and improvement (in percent) from applying sharp variance bounds to a study of anti-Arab prejudice.  Data are from \citet{Murrar_18}. The left panel reports results for  difference-in-means (unadjusted); the right panel, for linear regression adjustment (adjusted).}
    \label{tab:results_applc_murrar}
\end{table}
\subsection{Behavioral Economics Example: Anchoring Bias}

A longstanding debate in behavioral economics concerns the extent to which people have stable and well-ordered preferences over consumer products, policy outcomes, risk, and other inputs to personal or collective utility.  One challenge to traditional economic models that presuppose stable and well-ordered preferences is the phenomenon called ``anchoring bias'': decisions tend to change systematically depending on the information that people are presented with prior to making a choice.  For example, marketing scholars have noticed that presenting people with initial examples of very expensive consumer goods raises subjects' subsequent willingness to pay for other consumer goods in the same class.  The anchoring phenomenon has been shown not only for consumer preferences, but also for guesses about objective quantities, such as the height of the tallest redwood \citep{Green_Jacowitz_Kahneman_McFadden_1998}.

\citet{Lee_22} present a series of experiments that demonstrate the anchoring effect on consumer preferences concerning Miami hotels. In experiment 3a, 400 subjects completing an online survey were randomly assigned to two conditions, high price or low price.  In each condition, subjects were instructed to imagine ``purchasing a hotel
room for one night during an upcoming trip to Miami'' whereupon they 
were shown the name, a photograph, and a star rating for one of
two hotels. In the low-anchor condition, ``participants first
saw this information and the price of a room for one night
in a one-star Miami Beach hotel...In the high-anchor
condition, participants saw this information and the price
of a room for one night in a five-star Miami Beach hotel.'' (p.67)  Subjects in both conditions were later shown analogous information for ``either a two-star or a four-star Miami Beach hotel...without prices'';  the outcome measure is what each subject reports the maximum amount they
would be willing to pay for a room.  The anchoring hypothesis is that the higher anchor will generate a higher expressed willingness to pay.  A regression of willingness to pay for the specified hotel on treatment assignment (a description of either a one-star or a five-star hotel) and covariates (whether the target hotel is described as two-star or four-star, as well as subjects' gender and age) yield a large anchoring effect of \$101.3 with a (conventional) standard error of \$11.1.  Their data are publicly available \citep{Lee2025_noiseAnchoring}.

The right pane of Figure \ref{fig:QQplots_experiments} shows the QQ-plot for the residualized outcomes in the low and high anchoring condition. The deviation from a straight-line relationship suggests that sharp bounds will render a smaller variance estimate than the conventional variance estimator.
Indeed, Table \ref{tab:results_applc_lee} confirms that sharp variance bounds offer substantial improvements over both the conventional and Neyman's Cauchy-Schwarz variance estimates. 
\begin{table}[!t]
    \centering
    \begin{tabular}{rcc|cc}
         & Unadjusted & & Adjusted &  \vspace{0.1cm}
         \\\hline 
         & Variance Estimate &Improvement& Variance Estimate& Improvement \\
         Conventional & 176.5& 14.6\% &122.3 &12.1\% \\
         Cauchy-Schwarz&168.4 & 10.5\% & 119.2 & 9.8\%\\
         Sharp &150.7 & - & 107.5 & -\\
    \end{tabular}
    \caption{Variance estimates and improvement (in percent) from applying sharp variance bounds to a study of willingness to pay for hotel stays.  Data are from \citep{Lee_22}. The left panel reports results for  difference-in-means (unadjusted); the right panel, for linear regression adjustment (adjusted).}
    \label{tab:results_applc_lee}
\end{table}

\subsection{Biomedical Example: Two Fluid-Management Strategies in Acute Lung Injury}\begin{table}[!t]
    \centering
    \begin{tabular}{rcc|cc}
         & Unadjusted & & Adjusted &  \vspace{0.1cm}
         \\\hline 
         & Variance Estimate &Improvement& Variance Estimate& Improvement \\
         Conventional & 0.0964& 3.7\% &0.0930 & 1.7\% \\
         Cauchy-Schwarz&0.09462 & 3.3\% & 0.0929 & 1.4\%\\
         Sharp &0.0946 & - & 0.0922& -\\
    \end{tabular}
    \caption{Variance estimates and improvement (in percent) from applying sharp variance bounds to a study of management of acute lung injury.  Data are from \citep{BioExample}. The left panel reports results for  difference-in-means (unadjusted); the right panel, for linear regression adjustment (adjusted).}
    \label{tab:results_bio}
\end{table}
\begin{figure}
     \centering
     \begin{subfigure}[b]{0.49\textwidth}
         \centering
    \includegraphics[width=\textwidth]{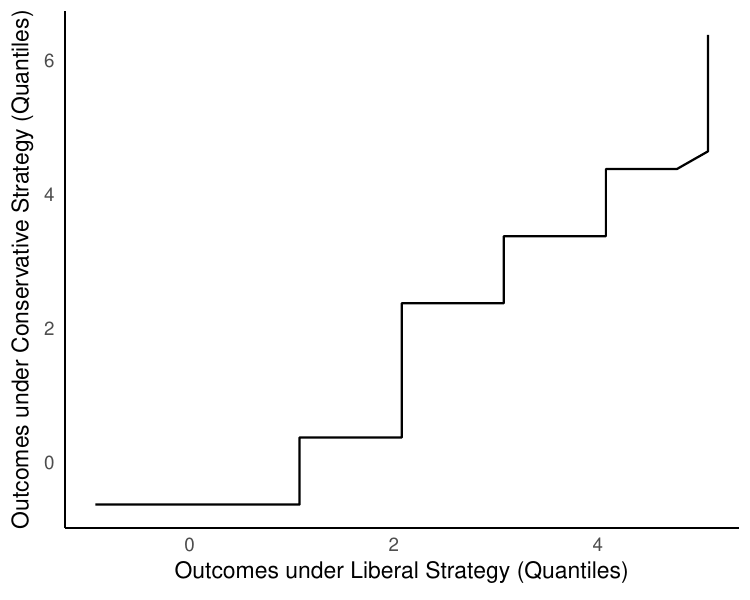}

     \end{subfigure}
     \hfill
     \begin{subfigure}[b]{0.49\textwidth}
         \centering
         \includegraphics[width=\textwidth]{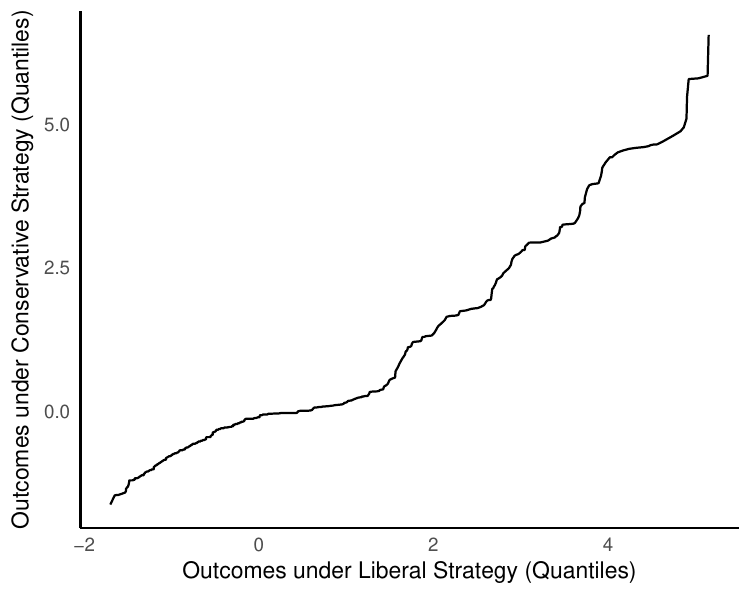}

     \end{subfigure}
     \hfill
        \caption{(Left) QQ-plot for the outcomes (number of ICU-free days) under a liberal and conservative treatment strategy from \citep{BioExample}. (Right) QQ-plot for the residualized outcomes adjusting for the patients APACHE II score (measure of patient's severity-of-disease). Both QQ-plots reveal deviations from the diagonal, but these deviations are fairly limited after covariate adjustment, indicating that sharp variance bounds will be only slightly smaller than the conventional variance estimator.}
        \label{fig:QQplots_bio}
\end{figure}
Optimal fluid management strategies for patients with acute lung injury has been an active area of experimental investigation. \citet{BioExample} performed a randomized experiment comparing two such strategies (which they refer to as  conservative and liberal) and measured their effects on multiple medical endpoints. We focus on the effect of these strategies on the number of ICU-free days during the first 7 days of treatment, an outcome for which the original study found a significant difference between the two strategies. The Fluid and Catheter Treatment Trial (FACTT) data is available upon request from BioLINCC \citep{FACTT2006}.

Figure \ref{fig:QQplots_bio} (Left) shows the QQ-plot for this outcome under the liberal and conservative strategy; the deviation from the diagonal indicates benefits of using our sharp variance bounds.
For this outcome, we find that our sharp variance bound improves variance estimation by 3.3\% compared to Neyman's Cauchy-Schwarz bound, see Table \ref{tab:results_bio}. Although the authors refrain from using regression-adjustment, they collect the patients' APACHE 2 score. This score is a measure of a patient's severity-of-disease and is (significantly) predictive of time spent in the ICU.\footnote{35 out of 1000 patients did not have an APACHE 2 score. We dropped these patients from the analysis. Removing these data-points does not yield a significantly different result than the originally reported one.} Adjusting for the APACHE 2 score increases the precision of the ATE.
Figure \ref{fig:QQplots_bio} (Right) shows the QQ-plot for the residualized outcomes. While residualization smoothens the QQ-plot, suggesting slightly diminished benefits from our sharp variance bounds compared to the unadjusted case, the plot still shows a noticeable deviation from the diagonal.
Using regression-adjustment based on the APACHE 2 score, our sharp variance bounds improve precision by 1.4\% compared to Neyman's Cauchy-Schwarz bounds, see Table \ref{tab:results_bio}. We find comparable benefits of using our sharp variance bounds (with and without regression-adjustment) for other endpoints reported by the study, including their primary endpoint (60-day mortality rate) and 28-day ventilator-free days.

\section{Conclusion}
This paper extends Fréchet-Hoeffding-type variance bounds to general regression adjustment for two-arm randomized experiments. We provide a consistent estimator for sharp variance bounds for both linear regression adjustment \citep{Lin_reg_adj} and a decorrelation method for general regression adjustment \citep{Su_Mou_Ding_Wainwright_2023}. Moreover, we provide software (in the form of an \href{https://github.com/JonasMikhaeil/SharpVarianceBounds}{R package}) that enables experimental researchers to make use of this method.

Our sharp variance estimator provides the least conservative variance estimate if no information beyond the marginals of the covariate-adjusted outcomes is known.\footnote{Improving variance bounds for complicated designs and under interference is an open problem. In this context, \citet{Harshaw_Middleton_Sävje_2021} recently investigated optimized variance bounds for linear treatment effect estimators in the class of bounds that are quadratic forms (in the potential outcome vectors).} This method of variance estimation has practical implications for power calculations, confidence intervals, and significance testing in randomized treatment-control experiments.

That said, sharp variance bounds may provide negligible improvements when the distributions of the adjusted potential outcomes in treatment and control are similar.  In many experimental applications, treatment effects appear to be small and homogeneous, in which case sharp bounds will generate variance estimates that are similar to conventional estimates.
By the same token, the benefits of sharp variance estimates may be marginal in applications where most of the variation in outcomes is predicted by the observed covariates, as this reduces the heterogeneity in the adjusted outcomes that drives the improvement beyond Neyman's Cauchy-Schwarz estimator.

Although the benefits of sharp variance bounds may be small in such cases, sharp bounds may nevertheless help clarify
when Neyman's Cauchy-Schwarz variance estimator is already sharp.
Even though the Cauchy-Schwarz variance estimator is rarely used in practice, less conservative variance estimators, and sharp estimators in particular, have the potential for large cumulative benefits even if the gains in precision may be modest for individual studies. 
The potential advantages of sharp variance bounds are especially relevant for biomedical trials, which often involve small numbers of subjects but evaluate treatments that are expected ex ante to produce heterogeneous effects. Such trials frequently use regression adjustment to wring additional precision from limited data, and the method presented here offers an opportunity to estimate the variance in a less conservative manner.

Despite their advantages, sharp variance bounds are limited to experimental designs that reveal just two types of potential outcomes, treated and untreated.  It is unclear how to extend Frechét-Hoeffding-type sharp variance bounds to randomized trials in which multiple dosages are assigned and the target parameter is the slope of the dose-response function. 
\section*{Data Availability and Code}
Code to reproduce our results is available under 
\href{https://github.com/JonasMikhaeil/SharpVarianceBounds}{https://github.com/ JonasMikhaeil/Sharp\\VarianceBounds}. The Fluid and Catheter Treatment Trial (FACTT) dataset is available from the Biologic Specimen and Data Repository Information Coordinating Center (BioLINCC) at \href{https://biolincc.nhlbi.nih.gov/studies/factt/}{https://biolincc.nhlbi.nih.gov/studies/factt/}. Access requires submission of a request and approval by BioLINCC.

\section*{Acknowledgments}
We thank Christopher Harshaw, Bodhisattva Sen, Nabarun Deb, Nicole Pashley, Getoar Sopa, and Andres Valdevenito for fruitful discussions.  We thank Melissa Michelson for providing replication data.
\bibliographystyle{apalike}
\bibliography{references}
\newpage
\section*{Appendix}
\appendix
\section{Some Results on Empirical Process Theory for Finite Populations}
\label{App:EmpProc}
In this section, we will extend some results of empirical process theory to the finite-population setting. 
Let $\{T_i\}_{i=1}^N$ be random (sampling) indicators that fulfill Assumption \ref{as:RandomAssignment}, i.e., $\{T_i\}_{i=1}^N$ are either iid. Bernoulli random variables with $P(T_i = 1) = \pi_T$ (in this case, define $n_T:=\pi_T N$) or they encode a simple random sample without replacement in which $n_T$ units are sampled from a population of size $N$. Based on these random sampling indicators, we have $\hat P_N = \frac{1}{n_T} \sum_{i=1}^N T_i \delta_{y_i} $ as an (Horvitz-Thompson) approximation to the empirical distribution $ P_N = \frac{1}{N} \sum_{i=1}^N \delta_{y_i} $ of the population $\{y_i\}_{i=1}^N$.

We are interested in convergence results for 
\begin{align}\nonumber
   \lVert \hat P_N - P_N \rVert_\mathcal{F} \, &= \, \sup_{f \in \mathcal{F}} \frac{1}{n_T}\sum_{i=1}^N T_i f(y_i) - \frac{1}{N}\sum_{i=1}^N f(y_i) ,
\end{align}
where $\mathcal{F}$ is some class of measurable functions.\footnote{To simplify our presentation, we will refrain from using outer probabilities. To avoid potential measurability issues, we will instead assume that $\mathcal{F}$ is pointwise measurable, see \citep{Vaart_Wellner_1996}.} While $\hat P_N$ is neither a probability distribution nor a (classical) empirical distribution, these Horvitz-Thompson type-empirical distributions \citep{Bertail_Chautru_Clémençon_2017} lend themselves to propositions similar to those in empirical process theory. Our goal here is to derive results on which classes of functions $\mathcal{F}$ are P-Glivenko-Cantelli (see Proposition \ref{Prop:GC}). 

We will require the following two lemmata provided by \citet{Hoeffding_1963}.
The first is the well-known result for bounded iid. random variables.
\begin{lemma}[Hoeffding Inequality for Bernoulli design]
\label{lemma:hoeffding-iid}
Let $X_1,...,X_n$ be independent with $a_i \leq X_i \leq b_i$ a.s. (i=1,2,...,n), then for $t>0$
\begin{align}
    P(\big\lvert \frac{1}{n}\sum_{i=1}^n X_i - \mathbb{E}[\frac{1}{n}\sum_{i=1}^n X_i] \big\rvert \geq t) \leq 2\exp(-2\frac{n^2t^2}{\sum_{i=1}^n(b_i - a_i)^2}).
\end{align}
\end{lemma}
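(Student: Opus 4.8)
The plan is to establish the inequality by the standard Chernoff-bounding (exponential Markov) argument, splitting the two-sided deviation into two one-sided tails and handling each by a moment-generating-function estimate that exploits independence and boundedness. First I would center the variables by setting $Y_i := X_i - \E[X_i]$, so that $\E[Y_i]=0$ and $Y_i$ ranges in an interval of the same length $b_i - a_i$. The target probability then becomes $P(|\sum_{i=1}^n Y_i| \ge nt)$, and a union bound reduces this to the two one-sided events $\{\sum_i Y_i \ge nt\}$ and $\{\sum_i Y_i \le -nt\}$, which accounts for the factor of $2$ in the conclusion. I treat the upper tail explicitly; the lower tail follows by applying the same argument to $-Y_i$.

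For the upper tail, for any $s>0$ the exponential Markov inequality gives
\[
P\Big(\sum_{i=1}^n Y_i \ge nt\Big) \le e^{-snt}\,\E\Big[e^{s\sum_{i=1}^n Y_i}\Big] = e^{-snt}\prod_{i=1}^n \E\big[e^{sY_i}\big],
\]
where the factorization uses the independence of the $Y_i$. The crucial ingredient is Hoeffding's lemma: for a centered variable supported in an interval of length $b_i-a_i$ one has $\E[e^{sY_i}] \le \exp\big(s^2(b_i-a_i)^2/8\big)$. Substituting this and taking logarithms yields the exponent $-snt + \tfrac{s^2}{8}\sum_{i=1}^n (b_i-a_i)^2$, a convex quadratic in $s$. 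Minimizing at $s^\star = 4nt/\sum_i (b_i-a_i)^2$ produces the exponent $-2n^2t^2/\sum_{i=1}^n (b_i-a_i)^2$, and combining the two tails gives exactly the stated bound.

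The main obstacle is Hoeffding's lemma itself, the only nonroutine step. The plan there is to use convexity of $x \mapsto e^{sx}$ to write, for $x \in [a,b]$, the chord bound $e^{sx} \le \tfrac{b-x}{b-a}e^{sa} + \tfrac{x-a}{b-a}e^{sb}$; taking expectations and using $\E[Y_i]=0$ gives $\E[e^{sY_i}] \le e^{\psi(s)}$ for an explicit $\psi$. I would then verify $\psi(0)=\psi'(0)=0$ and $\psi''(s)\le (b-a)^2/4$ for all $s$ (the second derivative is the variance of a two-point distribution, hence at most one quarter of the squared range), so that a second-order Taylor expansion yields $\psi(s) \le s^2(b-a)^2/8$. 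With this lemma in hand the remaining steps are elementary. Since this is the classical result of \citet{Hoeffding_1963}, one could alternatively simply invoke it; I include the sketch to keep the finite-population development self-contained.
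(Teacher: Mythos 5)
Your proposal is correct: the optimization of the Chernoff exponent at $s^\star = 4nt/\sum_{i=1}^n(b_i-a_i)^2$ does yield $-2n^2t^2/\sum_{i=1}^n(b_i-a_i)^2$, and your sketch of Hoeffding's lemma (chord bound from convexity, then $\psi(0)=\psi'(0)=0$ and $\psi''\le (b-a)^2/4$ via the two-point variance) is the standard argument. The paper itself gives no proof and simply invokes \citet{Hoeffding_1963}, and your argument is precisely the classical proof of that cited result, so the two are in complete agreement.
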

Hoeffding extended this result to simple random samples without replacement \citep{Hoeffding_1963,Bardenet_Maillard_2015,Bertail_Clémençon_2019}:
\begin{lemma}[Hoeffding Inequality for simple random samples without replacement]
\label{lemma:hoeffding-srswr}
Let $\mathcal{Y} = \{y_1,...,y_N\}$ be a finite population of $N$ points and let $X_1,...,X_n$ be a simple random sample (without replacement) drawn from $\mathcal{Y}$. Let
\begin{align}
    a = \min_{1\leq i \leq N} y_i \hspace{0.5cm }\text{and}\hspace{0.5cm } b = \max_{1\leq i \leq N} y_i
\end{align}

Then for all $t>0$ 
\begin{align}
    P(\big\lvert \frac{1}{n}\sum_{i=1}^n X_i - \frac{1}{N}\sum_{i=1}^N y_i \big\rvert \geq t) \leq 2 \exp(-2\frac{nt^2}{(b-a)^2}).
\end{align}
\end{lemma}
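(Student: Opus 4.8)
The plan is to run the exponential (Chernoff) method, reducing the without-replacement problem to the independent case already handled by Lemma \ref{lemma:hoeffding-iid}. First, writing $\bar X = \frac{1}{n}\sum_{i=1}^n X_i$ and $\mu = \frac{1}{N}\sum_{i=1}^N y_i$, I would split the two-sided event $\{|\bar X - \mu|\ge t\}$ into $\{\bar X - \mu \ge t\}$ and $\{\mu - \bar X \ge t\}$. Since replacing each $y_i$ by $-y_i$ leaves $(b-a)$ and the sampling scheme invariant while swapping the two events, it suffices to bound one of them and collect a factor of $2$. For the upper tail, Markov's inequality applied to $e^{sn(\bar X-\mu)}$ for an arbitrary $s>0$ to be optimized gives
\begin{align}\nonumber
P(\bar X - \mu \ge t) \le e^{-snt}\,\E\big[e^{s\sum_{i=1}^n X_i}\big].
\end{align}

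The crux is controlling the moment generating function of the without-replacement sum $S_n = \sum_{i=1}^n X_i$. The hard part will be that the $X_i$ are not independent, so neither Lemma \ref{lemma:hoeffding-iid} nor the product-of-MGFs factorization applies directly. Here I would invoke Hoeffding's comparison theorem (Theorem~4 of \citep{Hoeffding_1963}): for every continuous convex $\phi$,
\begin{align}\nonumber
\E\big[\phi(S_n)\big] \le \E\big[\phi(S_n^\ast)\big],
\end{align}
where $S_n^\ast = \sum_{i=1}^n Z_i$ is the sum of a same-size sample drawn \emph{with} replacement, i.e.\ of i.i.d.\ draws uniform on $\{y_1,\dots,y_N\}$. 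The content of this comparison is that sampling with replacement is more dispersed; its proof represents the without-replacement sum as a conditional expectation (an average over configurations of the with-replacement sample) so that convexity and Jensen's inequality produce the domination. Taking $\phi(x)=e^{sx}$ then yields $\E[e^{sS_n}] \le \E[e^{sS_n^\ast}] = \big(\tfrac{1}{N}\sum_{j=1}^N e^{sy_j}\big)^n$, which collapses the problem to the independent setting. Reproducing (rather than merely citing) this reduction is the only genuinely non-routine ingredient, and I would regard it as the main obstacle.

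Once the problem is independent, the remainder is the standard argument underlying Lemma \ref{lemma:hoeffding-iid}. Since each $Z_i\in[a,b]$ with mean $\mu$, Hoeffding's lemma gives $\E[e^{s(Z_i-\mu)}]\le e^{s^2(b-a)^2/8}$, hence $\E[e^{s(S_n^\ast-n\mu)}]\le e^{ns^2(b-a)^2/8}$ and
\begin{align}\nonumber
P(\bar X - \mu \ge t) \le \exp\!\Big(-snt + \tfrac{1}{8}ns^2(b-a)^2\Big).
\end{align}
Minimizing the exponent over $s>0$ at $s = 4t/(b-a)^2$ gives $\exp\!\big(-2nt^2/(b-a)^2\big)$ for the one-sided probability, and doubling via the symmetry step of the first paragraph produces the claimed two-sided bound. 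The remaining pieces (the Markov step, the sub-Gaussian estimate, and the optimization in $s$) are routine, so essentially all the difficulty is concentrated in the without-replacement-to-with-replacement reduction.
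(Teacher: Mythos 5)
Your proposal is correct and coincides with the argument the paper itself relies on: the paper proves this lemma purely by citation to \citet{Hoeffding_1963}, whose original proof is exactly your route --- the Chernoff method combined with Hoeffding's convex-ordering comparison theorem (Theorem 4 of that paper) to dominate the without-replacement MGF by the with-replacement one, followed by the standard sub-Gaussian bound and optimization in $s$. (One cosmetic slip: your first display should read $P(\bar X - \mu \ge t) \le e^{-snt}\,\E\big[e^{s(S_n - n\mu)}\big]$, i.e.\ it is missing the centering factor $e^{-sn\mu}$, but you use the correctly centered quantity in the subsequent steps, so nothing breaks.)
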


These lemmata are useful because they allow us to bound the Orlicz norm $\lVert X\rVert_{\psi_p} = \inf\{C>0 : \mathbb{E}[\psi_p(\frac{\lvert X\rvert}{C}])\leq 1\}$, where $\psi_p(x) = \exp(x^p) -1$ for $p\geq 1$. These Orlicz norms, in turn, bound the $L_p$-norms, $\lVert X \rVert_p \leq \lVert X \rVert_{\psi_p}$.

The following Lemma \citep[Lemma 2.2.1]{Vaart_Wellner_1996} gives bounds of the $\psi_p$-Orlicz-norm based on tail inequalities.
\begin{lemma}
\label{lemma:orlicz}
    Let $X$ be a random variable with $P(\lvert X \rvert > x) \leq K \exp(-Cx^p)$ for every $x$, for constants $K$ and $C$, and for $p\geq1$. Then its Orlicz norm is bounded by
    \begin{align}
        \lVert X \rVert_{\psi_p} \leq \bigg(\frac{1+K}{C}\bigg)^{1/p}.
    \end{align}
\end{lemma}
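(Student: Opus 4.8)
The plan is to verify the definition of the Orlicz norm directly. Writing $D := \left(\frac{1+K}{C}\right)^{1/p}$, it suffices to show that $D$ is admissible in the infimum defining $\lVert X \rVert_{\psi_p}$, i.e. that $\E\big[\psi_p(|X|/D)\big] \leq 1$. Since $\psi_p(x) = \exp(x^p) - 1$, this is equivalent to establishing that $\E\big[\exp(s|X|^p)\big] \leq 2$ for the choice $s := D^{-p} = C/(1+K)$. Once this is shown, the claimed bound follows immediately.

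First I would express the exponential moment via the layer-cake (tail-integration) formula. Because $\exp(s|X|^p) \geq 1$ almost surely, we have
\[
\E\big[\exp(s|X|^p)\big] = \int_0^\infty P\big(\exp(s|X|^p) > t\big)\,dt.
\]
On $(0,1)$ the integrand equals $1$ and contributes exactly $1$; for $t > 1$ it equals $P\big(|X| > (\log t / s)^{1/p}\big)$. This is precisely where the hypothesis enters: substituting the tail bound with $x = (\log t/s)^{1/p}$, so that $x^p = \log t/s$, yields
\[
P\big(|X| > (\log t/s)^{1/p}\big) \leq K\exp\!\big(-C\log t / s\big) = K\, t^{-C/s}.
\]

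The remaining work is a single power-law integral. I would bound $\E\big[\exp(s|X|^p)\big] \leq 1 + K\int_1^\infty t^{-C/s}\,dt$, noting that the integral converges precisely when $C/s > 1$, i.e. when $s < C$; the choice $s = C/(1+K)$ satisfies this strictly whenever $K > 0$. Evaluating gives $\int_1^\infty t^{-C/s}\,dt = \frac{s}{C-s}$, hence $\E\big[\exp(s|X|^p)\big] \leq 1 + \frac{Ks}{C-s}$. Plugging in $s = C/(1+K)$ makes $\frac{Ks}{C-s} = 1$, so the exponential moment is at most $2$, and therefore $\lVert X \rVert_{\psi_p} \leq D = \left(\frac{1+K}{C}\right)^{1/p}$, as desired.

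There is no deep obstacle here; the only point requiring care is the convergence condition $s < C$ for the tail integral, which is exactly what pins down the constant — one selects the largest admissible $s$ (equivalently, the smallest $D$) for which $1 + Ks/(C-s) \leq 2$, and this extremal choice reproduces the stated bound. The degenerate case $K = 0$ forces $|X| = 0$ almost surely, so $\lVert X \rVert_{\psi_p} = 0$ and the inequality holds trivially.
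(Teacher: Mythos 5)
Your proof is correct, and it matches the standard argument: the paper states this lemma without proof, citing \citet[Lemma 2.2.1]{Vaart_Wellner_1996}, whose proof is exactly your tail-integration (Fubini/layer-cake) computation reducing $\E[\exp(s|X|^p)]\leq 2$ to a convergent power-law integral with the extremal choice $s = C/(1+K)$. All steps check out, including the convergence condition $C/s>1$ and the degenerate case $K=0$.
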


Bounds on the Orlicz norm, in turn, provide maximal inequalities. We will require the following Lemma \citep[Lemma 2.2.2]{Vaart_Wellner_1996}:
\begin{lemma}
\label{lemma:max-ineq}
    For any random variables $X_1,...,X_m$, we have
    \begin{align}
        \lVert\max_{1\leq i \leq m} X_i \rVert_{\psi_p} \leq K \psi_p^{-1}(m) \max_{1\leq i \leq m} \lVert X_i \rVert_{\psi_p},
    \end{align}
where $K$ is a constant only depending on $\psi_p$.
\end{lemma}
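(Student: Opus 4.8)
The statement is the classical maximal inequality for Orlicz norms (\citep[Lemma 2.2.2]{Vaart_Wellner_1996}), so the plan is to exploit the two structural features of $\psi_p(x) = \exp(x^p) - 1$: it is convex and nondecreasing with $\psi_p(0) = 0$, and it is submultiplicative up to a constant. Concretely, I would first record the inequality $\psi_p(x)\psi_p(y) \leq \psi_p(cxy)$ for all $x,y \geq 1$, with $c = 2^{1/p}$. This follows from two elementary facts valid for $x,y \geq 1$: $(e^{x^p}-1)(e^{y^p}-1) \leq e^{x^p+y^p}-1$, and $x^p + y^p \leq 2(xy)^p$ (the latter from $(x^p-1)(y^p-1)\geq 0$ together with $x^py^p \geq 1$).

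Next I would reduce to the essential case. Since $\psi_p$ is increasing, $\lVert X \rVert_{\psi_p}$ depends only on $\lvert X\rvert$ and $\max_i X_i \leq \max_i \lvert X_i\rvert$, so I may assume $X_i \geq 0$; rescaling, I may also assume $\sigma := \max_i \lVert X_i\rVert_{\psi_p} = 1$, so that $\E[\psi_p(X_i)] \leq 1$ for every $i$. Writing $D = \psi_p^{-1}(m)$ (so $\psi_p(D) = m$, and $D \geq 1$ once $m \geq \psi_p(1)$, the case $m$ small being trivial), the goal becomes to show $\E[\psi_p(\max_i X_i / (KcD))] \leq 1$ for a suitable constant $K$, which is exactly the assertion $\lVert \max_i X_i \rVert_{\psi_p} \leq KcD = K c\,\psi_p^{-1}(m)$.

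The heart of the argument is a decomposition that uses the maximum rather than a naive sum. Because $\psi_p$ is increasing, $\psi_p(\max_i X_i/(cD)) = \max_i \psi_p(X_i/(cD))$; splitting according to whether the maximizing index satisfies $X_i \geq cD$ gives the pointwise bound $\psi_p(\max_i X_i/(cD)) \leq \psi_p(1) + \sum_i \psi_p(X_i/(cD))\,\mathbbm{1}\{X_i \geq cD\}$. On the tail event $\{X_i \geq cD\}$ the submultiplicative inequality (with $x = X_i/(cD) \geq 1$ and $y = D$) yields $\psi_p(X_i/(cD)) \leq \psi_p(X_i)/\psi_p(D) = \psi_p(X_i)/m$. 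Taking expectations and summing over the $m$ indices then gives $\E[\psi_p(\max_i X_i/(cD))] \leq \psi_p(1) + \tfrac{1}{m}\sum_i \E[\psi_p(X_i)] \leq \psi_p(1) + 1$. Finally, convexity with $\psi_p(0)=0$ gives $\psi_p(t/K) \leq \psi_p(t)/K$ for $K \geq 1$, so choosing $K = \psi_p(1) + 1$ reduces the constant $\psi_p(1)+1$ to $1$, which completes the bound after restoring the factor $\sigma$ and the reduction to $\lvert X_i\rvert$.

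I expect the main obstacle to be the bookkeeping around the origin: $\psi_p$ is genuinely not submultiplicative for small arguments and $\psi_p^{-1}(m)$ may fall below $1$, so the clean tail estimate $\psi_p(X_i/(cD)) \leq \psi_p(X_i)/m$ holds only on $\{X_i \geq cD\}$. The two devices that rescue the proof are the max-based decomposition (which confines the uncontrolled low-argument contribution to the single additive constant $\psi_p(1)$ rather than a sum of $m$ such terms) and the final convexity rescaling (which absorbs the resulting $O(1)$ slack into a single dimension-free $K$ depending only on $\psi_p$). Verifying that $K$ can indeed be taken independent of $m$ is the one point that requires care.
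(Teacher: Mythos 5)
Your proof is correct and is precisely the classical argument for Lemma 2.2.2 of \citet{Vaart_Wellner_1996}, which the paper states without proof by citation to that reference: the submultiplicativity bound $\psi_p(x)\psi_p(y)\leq \psi_p(2^{1/p}xy)$ for $x,y\geq 1$, the max-based pointwise decomposition with the single additive $\psi_p(1)$ term, division by $\psi_p(D)=m$ on the tail event, and the final convexity rescaling all match the standard proof, and each step checks out (including the edge cases $m$ small and $D<1$ that you flag). Since the paper defers entirely to this textbook proof, there is nothing to compare beyond noting the agreement.
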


We will now provide a proposition giving a sufficient condition for a class of functions $\mathcal{F}$ to be Glivenko-Cantelli in probability.\footnote{Note that this proposition is weaker than the classical GC by entropy theorem. The classical result shows that the entropy condition is both necessary and sufficient, and also provides almost sure convergence of $\lVert \hat F_N - F_N\rVert_{\mathcal{F}}$.}
\begin{proposition}[P-Glivenko-Cantelli by entropy]
\label{Prop:GC}
Let $\mathcal{F}$ be a class of measurable functions with envelope $F$ such that $P_N(\mathcal{F})\leq \infty$. Let $\mathcal{F}_M$ be the class of functions $f\mathbbm{1}[F\leq M]$ for all $f \in \mathcal{F}$. Then under Assumption $\ref{as:RandomAssignment}$
\begin{align}
    \lVert \hat P_N - P_N \rVert_{\mathcal{F}} = \sup_{f \in \mathcal{F}} \frac{1}{n_T}\sum_{i=1}^N T_i f(y_i) - \frac{1}{N} \sum_{i=1}^N f(y_i) \rightarrow 0 \hspace{0.5cm} \text{in probability}
\end{align}
if there exists an $M>0$ such that 
\begin{align}
    \frac{1}{N} \log N(\varepsilon,\mathcal{F}_M,L_1( P_N)) 
    \overset{P}{\rightarrow} 0
\end{align}
for every $\varepsilon > 0$.
\end{proposition}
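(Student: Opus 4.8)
The plan is to transplant the classical Glivenko--Cantelli-by-entropy argument (cf.\ \citep{Vaart_Wellner_1996}, Theorem~2.4.3) into the finite-population setting, while exploiting a structural feature of the sampling scheme that eliminates the symmetrization step one normally needs. The starting observation is that, because $T_i \in \{0,1\}$ and $\frac{1}{n_T} = \frac{1}{\pi_T N}$, the Horvitz--Thompson measure is \emph{dominated} by the population measure: for any nonnegative $g$,
\begin{align}\nonumber
\hat P_N g \;=\; \frac{1}{n_T}\sum_{i=1}^N T_i\, g(y_i) \;\leq\; \frac{1}{n_T}\sum_{i=1}^N g(y_i) \;=\; \pi_T^{-1}\,P_N g .
\end{align}
This pointwise domination is what lets us control the fluctuation of $\hat P_N$ inside an $L_1(P_N)$-ball \emph{deterministically}; in the i.i.d.\ case one instead covers with respect to the random empirical measure and pays for it with a symmetrization inequality.

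First I would truncate. Writing $f = f\mathbbm{1}[F\leq M] + f\mathbbm{1}[F>M]$ and using $|f|\leq F$, the triangle inequality gives
\begin{align}\nonumber
\lVert \hat P_N - P_N\rVert_{\mathcal{F}} \;\leq\; \lVert \hat P_N - P_N\rVert_{\mathcal{F}_M} + (\hat P_N + P_N)\big(F\mathbbm{1}[F>M]\big).
\end{align}
By the domination above the tail term is at most $(1+\pi_T^{-1})\,P_N(F\mathbbm{1}[F>M])$, which the envelope condition forces to be arbitrarily small once $M$ is large --- again deterministically, with no appeal to the sampling randomness. It then remains to show $\lVert \hat P_N - P_N\rVert_{\mathcal{F}_M}\to 0$ in probability at the (large) truncation level at which the entropy hypothesis is applied.

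For the bounded class $\mathcal{F}_M$ (whose members satisfy $|f\mathbbm{1}[F\leq M]|\leq M$), fix $\varepsilon>0$ and let $f_1,\dots,f_m$ be the centers of a minimal $L_1(P_N)$-cover, $m = N(\varepsilon,\mathcal{F}_M,L_1(P_N))$; because the cover is taken with respect to the \emph{population} measure, the centers are nonrandom given the population. For $f$ in the ball around $f_j$,
\begin{align}\nonumber
|(\hat P_N - P_N)f| \;\leq\; |(\hat P_N - P_N)f_j| + \hat P_N|f-f_j| + P_N|f-f_j| \;\leq\; |(\hat P_N - P_N)f_j| + (1+\pi_T^{-1})\varepsilon,
\end{align}
using $P_N|f-f_j|<\varepsilon$ and $\hat P_N|f-f_j|\leq \pi_T^{-1}P_N|f-f_j|$, so the within-ball remainder is handled with no secondary empirical-process argument. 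Taking the supremum over $f$ reduces everything to the finite maximum $\max_{j\leq m}|(\hat P_N - P_N)f_j|$. Each $(\hat P_N - P_N)f_j$ is a centered average of a bounded summand, so the relevant Hoeffding bound applies --- Lemma~\ref{lemma:hoeffding-iid} under the Bernoulli design and Lemma~\ref{lemma:hoeffding-srswr} under simple random sampling --- giving a sub-Gaussian tail $P(|(\hat P_N - P_N)f_j|\geq t)\leq 2\exp(-cNt^2/M^2)$ with $c$ depending only on $\pi_T$. Lemma~\ref{lemma:orlicz} then yields $\lVert (\hat P_N - P_N)f_j\rVert_{\psi_2}=O(M/\sqrt N)$ uniformly in $j$, and the maximal inequality Lemma~\ref{lemma:max-ineq} gives
\begin{align}\nonumber
\Big\lVert \max_{1\leq j\leq m}|(\hat P_N - P_N)f_j|\Big\rVert_{\psi_2} \;\leq\; K\,\psi_2^{-1}(m)\,\max_{j}\lVert (\hat P_N - P_N)f_j\rVert_{\psi_2} \;=\; O\!\left(M\sqrt{\tfrac{\log(1+m)}{N}}\right).
\end{align}
Since the entropy hypothesis gives $\frac1N\log m\overset{P}{\to}0$, this Orlicz norm vanishes in probability, whence $\max_j|(\hat P_N - P_N)f_j|\overset{P}{\to}0$; combining the displays and letting $\varepsilon\downarrow0$ (and $M$ large) finishes the proof.

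The main obstacle I anticipate is not any single estimate but the bookkeeping around the random covering number: $m$ (and, in the intended application, the population $\{y_i\}$ itself) may be random, so the maximal inequality has to be invoked conditionally and the in-probability statement $\frac1N\log m\to 0$ carried through Markov's inequality. Everything else is made routine by the domination $\hat P_N\leq \pi_T^{-1}P_N$, which is precisely the device that replaces the symmetrization step that \citet{Aronow_Green_Lee_2014} could avoid only because they worked with uniformly convergent CDFs rather than a general function class.
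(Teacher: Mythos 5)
Your proposal is correct and takes essentially the same route as the paper's own proof: truncation via the envelope, an $L_1(P_N)$-cover of $\mathcal{F}_M$ whose within-ball error is controlled by the deterministic domination $\hat P_N\lvert f-g\rvert \leq (N/n_T)\,P_N\lvert f-g\rvert$, Hoeffding's inequalities for the Bernoulli and without-replacement designs, and the $\psi_2$-Orlicz maximal inequality over the net, with the $o_P(N)$ entropy hypothesis annihilating the $M\sqrt{\log(1+m)}/\sqrt{N}$ factor. The only cosmetic differences are that you bound the tail term $\hat P_N(F\mathbbm{1}[F>M])$ pointwise by $\pi_T^{-1}P_N(F\mathbbm{1}[F>M])$ where the paper instead takes expectations (getting the factor $2$), and that you flag explicitly the conditional treatment of the random covering number, which the paper handles implicitly in its final in-probability step.
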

\begin{proof}
    By the triangle inequality, we have
\begin{align}\nonumber
    \mathbb{E}\lVert \hat P_N - P_N \rVert_{\mathcal{F}} \leq \mathbb{E}\bigg\lVert \frac{1}{n_T} \sum_{i=1}^N T_i f(y_i) - \frac{1}{N}\sum_{i=1}^Nf(y_i)\bigg\rVert_{\mathcal{F}_M} + 2P_N\big[F\mathbbm{1}[F>M]\big]
\end{align}
for every $M$. The second term can be made arbitrarily small by choosing $M$ sufficiently large. Thus to prove convergence in mean, it suffices to show that the first term converges to zero for a fixed $M$. Let $\mathcal{G}$ be an $\eta$-net in $L_1(P_N)$ over $\mathcal{F}_M$, then for any $f \in \mathcal{F}_M$, there exists a $g \in \mathcal{G}$ such that
\begin{align}\nonumber
    \bigg\lvert\frac{1}{n_T} \sum_{i=1}^N T_i f(y_i) - \frac{1}{N}\sum_{i=1}^Nf(y_i) \bigg\rvert \, &\leq \, \bigg\lvert\frac{1}{n_T} \sum_{i=1}^N T_i g(y_i) - \frac{1}{N}\sum_{i=1}^Ng(y_i) \bigg\rvert + \bigg\lvert (\hat P_N - P_N) \big(f -g\big) \bigg\rvert \\
    \, &\leq \, \bigg\lVert \frac{1}{n_T} \sum_{i=1}^N T_i f(y_i) - \frac{1}{N}\sum_{i=1}^Nf(y_i) \bigg\rVert_{\mathcal{G}} +\big(\frac{N}{n_T}+1\big)\eta.
\end{align}
Consequently,
\begin{align}
\label{eq:L1}
   \mathbb{E}\bigg\lVert \frac{1}{n_T} \sum_{i=1}^N T_i f(y_i) - \frac{1}{N}\sum_{i=1}^Nf(y_i)\bigg\rVert_{\mathcal{F}_M} \leq  \mathbb{E}\bigg\lVert \frac{1}{n_T} \sum_{i=1}^N T_i g(y_i) - \frac{1}{N}\sum_{i=1}^Nf(g_i) \bigg\rVert_{\mathcal{G}} +\big(\frac{N}{n_T}+1\big)\eta.
\end{align}
The cardinality of $\mathcal{G}$ can be chosen to be $N(\eta,\mathcal{F}_M,L_1(P_N))$. 
Under Assumption \ref{as:RandomAssignment} and Lemma \ref{lemma:hoeffding-iid} and \ref{lemma:hoeffding-srswr}, we have
\begin{align} \nonumber
    P(\lvert \frac{1}{n_T} \sum_{i=1}^N T_i g(y_i) - \frac{1}{N}\sum_{i=1}^Nf(g_i) \rvert \geq t) \leq 2 \exp\bigg(-2 \frac{n_T t^2}{\frac{N}{n_T}M^2}\bigg) 
\end{align}
where we have used that $\sum_{i=1}^N g(y_i)^2/N \leq M^2$ (if necessary after truncating $g$ if required) and that $\frac{N}{n_t} \geq 1$. Consequently, Lemma \ref{lemma:orlicz} yields that
\begin{align}
    \lVert \frac{1}{n_T} \sum_{i=1}^N T_i g(y_i) - \frac{1}{N}\sum_{i=1}^Nf(g_i) \rVert_{\psi_2} \leq \sqrt{\frac{3N}{2n_T}}\frac{M}{\sqrt{n_T}}.
\end{align}
Bounding the $L_1$-norm on the right-hand side of Equation \ref{eq:L1} by the $\psi_2$-Orlicz-norm, application of Lemma \ref{lemma:max-ineq} now yields
\begin{align}
\label{eq:entropy-bound}
    \mathbb{E}\bigg\lVert \frac{1}{n_T} \sum_{i=1}^N T_i f(y_i) - \frac{1}{N}\sum_{i=1}^Nf(y_i)\bigg\rVert_{\mathcal{F}_M} \leq K\sqrt{1+N(\eta,\mathcal{F}_M,L_1(P_N))}\sqrt{\frac{3}{2}}\frac{N}{n_T}\frac{M}{\sqrt{N}} +\big(\frac{N}{n_T}+1\big)\eta,
\end{align}
where $K$ is a universal constant. Under Assumption \ref{as:RandomAssignment}, we have $\frac{n_T}{N} \rightarrow \pi_T \in (0,1)$. Moreover,
 $\sqrt{\log N(\eta,\mathcal{F}_M,L_1(P_N)}\frac{1}{\sqrt{N}}$ tends to zero in probability by assumption, hence the right side of \eqref{eq:entropy-bound} tends to $(1/\pi_T+1)\eta$ in probability. The argument is valid for every $\eta >0 $ so that we can conclude that the left side of \eqref{eq:entropy-bound} converges to zero in probability.
Consequently $\lVert \hat P_N - P \rVert_{\mathcal{F}} \rightarrow 0$ in mean and hence in probability.

\end{proof}
The proof of Proposition \ref{Prop:ConsistencyOfSharpBounds} will require the estimates $\hat G_N$ and $\hat F_N$ of the marginals in the finite-population setting to converge weakly to limits $G$ and $F$.
To show this, we will use the bounded-Lipschitz distance. We will require the following Lemma bounding the entropy of this class of functions:
\begin{lemma}
\label{lemma:BL}
    Let $\text{BL} := \{f: \mathbb{R} \rightarrow [-1,1]| f \text{ is } 1\text{-Lipschitz}\}$ be the class of all 1-bounded-Lipschitz functions. 
    Let $P_N$ be such that $\mathbb{E}_{P_N}[\lvert Y \rvert] = \frac{1}{N}\sum_{i=1}^N \lvert y_i \rvert \leq C$ for some constant $C$.
    Then
    \begin{align}
        \log N(\varepsilon,\text{BL},L_1(P_N)) \leq A \frac{C}{\varepsilon}
    \end{align}
    for some constant $A$ and for all $\varepsilon >0$.
\end{lemma}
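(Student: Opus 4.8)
The plan is to construct an explicit $\varepsilon$-net by quantizing the functions on a spatial grid, and then to bound the net's cardinality using the first-moment condition. Since $P_N$ is finitely supported, I would fix a grid of width $\delta=\varepsilon/2$ with cells $I_j=[j\delta,(j+1)\delta)$, put $a_j:=f(j\delta)$ for $f\in\text{BL}$, and let $\tilde f$ be constant on each cell, equal to $a_j$ rounded to the nearest multiple of $\delta$. Because $f$ is $1$-Lipschitz and $[-1,1]$-valued, on each cell $\lvert f-\tilde f\rvert\le\delta+\delta/2$, so $\| f-\tilde f\|_{L_1(P_N)}\le 2\delta=\varepsilon$. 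Hence the quantized functions form an $\varepsilon$-net, and the problem reduces to counting how many distinct $\tilde f$ can occur as elements of $L_1(P_N)$.

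First I would discard the cells that do not matter. Two functions $\tilde f$ agreeing on every cell of positive $P_N$-mass are equal in $L_1(P_N)$, so only the occupied cells contribute; moreover any collection of cells carrying total $P_N$-mass at most $\varepsilon$ may be replaced by the single value $0$ at an extra $L_1(P_N)$-cost of at most $\varepsilon$, since $\lvert f\rvert\le 1$. Applying this to the two outer tails, together with Markov's inequality $P_N(\lvert Y\rvert>t)\le C/t$, removes all mass beyond radius $O(C/\varepsilon)$, so that after this reduction a net element is determined by its quantized values on the occupied cells inside a bounded region.

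Next I would count those values. Listing the occupied cells from left to right and writing $g_\ell$ for the number of grid steps between the $\ell$-th occupied cell and its predecessor, the $1$-Lipschitz property forces the quantized values of consecutive occupied cells to differ by at most $g_\ell\delta+2\delta$, while the range bound confines every value to a set of cardinality $O(1/\varepsilon)$. The number of admissible values at the $\ell$-th cell given the previous one is therefore at most $\min(2g_\ell+3,\,3/\varepsilon)$, whence
\begin{align}\nonumber
\log N(\varepsilon,\text{BL},L_1(P_N))\ \le\ \sum_{\ell}\log\!\big(\min(2g_\ell+3,\,3/\varepsilon)\big),
\end{align}
and the task becomes showing that the right-hand side is $O(C/\varepsilon)$. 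Here the moment bound is essential: an occupied cell at position $x$ contributes an amount of order $\lvert x\rvert\,P_N(I)$ to $\tfrac1N\sum_i\lvert y_i\rvert\le C$, so creating many distinguishable occupied cells far from the origin is moment-expensive. Balancing the two regimes — many cells separated by small gaps, each contributing $O(1)$, against few widely separated cells, each contributing $O(\log(1/\varepsilon))$ but limited in number by the moment — and optimizing the trade-off by an arithmetic-geometric-mean step, one finds that the extremal configuration spreads mass with spacing of order $\varepsilon$ over a region of length $O(C)$, yielding the bound $O(C/\varepsilon)$.

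The main obstacle is exactly this final counting step. A crude bound that retains only the span estimate $O(C/\varepsilon)$ from Markov and sums $\log(2g_\ell+3)$ over all occupied cells overshoots to $O(C/\varepsilon^2)$; the improvement to $O(C/\varepsilon)$ genuinely requires both the range cap, which limits each widely separated cell to $O(\log(1/\varepsilon))$ rather than $O(g_\ell)$, and the first-moment charge on distant mass, which limits how many cells can simultaneously be occupied and carry non-negligible mass. Turning the heuristic balancing into a rigorous inequality is the crux; once this is in place, $\log N(\varepsilon,\text{BL},L_1(P_N))\le A\,C/\varepsilon$ follows with an absolute constant $A$.
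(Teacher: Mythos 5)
Your construction coincides with the paper's own: a uniform spatial grid of width $O(\varepsilon)$ out to radius $B\asymp C/\varepsilon$ furnished by Markov's inequality, values rounded to multiples of $\varepsilon$, the truncated tail controlled by $\lvert f\rvert\le 1$, and a ``few choices per successive cell'' count exploiting the Lipschitz constraint. Up to the counting step both arguments are sound, and they honestly deliver $\log N(\varepsilon,\text{BL},L_1(P_N))\le A\,C/\varepsilon^2$, since an $\varepsilon$-spaced grid on $[-B,B]$ has $2B/\varepsilon\asymp C/\varepsilon^2$ cells, each contributing $O(1)$ bits. The gap is exactly the step you flag as the crux, and it is genuine: your proposed repair --- charging each occupied cell to the first moment --- cannot work, because occupancy is moment-cheap. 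Atoms may carry arbitrarily small mass: fill every cell of $[C,R]$ with per-cell mass $p\asymp C\varepsilon/R^2$; the first moment stays below $C$, the occupied tail carries total mass $\asymp C/R$ (for $R=C/\sqrt{\varepsilon}$ this is $\asymp\sqrt{\varepsilon}\gg\varepsilon$, so your discard step removes only an $O(C/\varepsilon)$ minority of these cells), yet the number of occupied cells is $\asymp C/\varepsilon^{3/2}\gg C/\varepsilon$, and with your fixed quantization $\delta\asymp\varepsilon$ your sum $\sum_{\ell}\log\min(2g_\ell+3,3/\varepsilon)$ really is of order $C/\varepsilon^{3/2}$ on this configuration. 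What the first moment buys is not few occupied cells but little mass far out, so the correct exploitation is to coarsen the quantization in proportion to $1/\text{mass}$: with per-cell accuracy $\eta_j\asymp\sqrt{\varepsilon/q_j}$ over unit cells of mass $q_j$ one gets $\log N(\varepsilon)\le A\big(\sum_j\sqrt{q_j}\big)^2/\varepsilon$, but even this optimized bound is of order $(C/\varepsilon)\log(1/\varepsilon)$ for water-filling measures $q_j\asymp(1+\nu\lvert j\rvert)^{-2}$ spread out to $\lvert j\rvert\asymp C/\varepsilon$; so the clean rate $A\,C/\varepsilon$ appears unattainable from a first-moment hypothesis alone, and no balancing argument of the kind you sketch can rescue it.

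You should know that the paper's own proof stumbles at the same spot: it sets $a_k=k\varepsilon-B$ but asserts $M=2B=2\lfloor C/\varepsilon\rfloor+1$ grid cells, whereas covering $[-B,B]$ at spacing $\varepsilon$ requires $M\asymp 2B/\varepsilon\asymp C/\varepsilon^2$; its count $(2\lfloor 1/\varepsilon\rfloor+1)\,7^{M-1}$ therefore also yields only $\log N(\varepsilon)\le A\,C/\varepsilon^2$ --- precisely the bound your argument establishes rigorously. This discrepancy is immaterial downstream: Lemma \ref{lemma:BL} enters only through Proposition \ref{Prop:GC}, which needs $\frac{1}{N}\log N(\varepsilon,\mathcal{F}_M,L_1(P_N))\overset{P}{\rightarrow}0$, and any bound polynomial in $1/\varepsilon$ and independent of $N$ suffices. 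Two repairs are available: either weaken the lemma's conclusion to $A\,C/\varepsilon^2$, or strengthen the hypothesis to a second-moment bound $\frac{1}{N}\sum_i y_i^2\le C_2$ (which the paper in fact has at its point of use, via the uniform square-integrability in Assumption (b) of Proposition \ref{Prop:ConsistencyOfSharpBounds}), in which case the mass-adaptive scheme gives $\log N(\varepsilon)\le A\sqrt{C_2}/\varepsilon$ with no logarithm, since $\sum_j\sqrt{q_j}\le\big(\sum_j q_j(1+j^2/C_2)\big)^{1/2}\big(\sum_j(1+j^2/C_2)^{-1}\big)^{1/2}=O(C_2^{1/4})$.
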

\begin{proof}
    For $\varepsilon >1$ take $f_0 \equiv 0$ and observe that for any $f \in \text{BL}$, we have $\lVert f- f_0\rVert_{L_1(P_N)} \leq 1 < \varepsilon$ and hence $N(\varepsilon,\text{BL},L_1(P_N)) = 1$.

    Let $0 < \varepsilon < 1$. We will construct an $\varepsilon$-cover of BL (under the $L_1(P_N)$-norm) with cardinality less than $\exp(A\frac{C}{\varepsilon }) $ for some $A>0$. This will complete the proof as $N(\varepsilon,\text{BL},L_1(P_N))$ will then be bounded by $\exp(A\frac{C}{\varepsilon })$.

    Define $B \in \mathbb{R}$ such that 
    \begin{align}
    \label{eq:gridbound}
        \frac{1}{N} \sum_{i=1}^N \mathbbm{1}\{\lvert y_i \rvert \geq B\} \leq \varepsilon.
    \end{align}
    We have $\mathbb{E}_{P_N}[\mathbbm{1}\{\lvert y_i \rvert \geq B\}] \leq \frac{C}{B}$ by Markov's Inequality and the first-moment bound. Choosing $B = \lfloor \frac{C}{\varepsilon}\rfloor + 1$ thus satisfies Equation \ref{eq:gridbound}.
     We now construct an $\varepsilon$-grid covering the interval $[-B,-B]$: Let $a_k := k \varepsilon - B$ for $k=0,...,M$, where $M=2B = 2 \lfloor \frac{C}{\varepsilon}\rfloor + 1$.
     Additonally, define $B_k := (a_{k-1},a_k]$ for $k=1,...,M$. For each $f \in BL$, define 
    \begin{align}
        \tilde f(x) = \sum_{k=1}^M \varepsilon   \bigg\lfloor{\frac{f(a_k)}{\varepsilon}} \bigg\rfloor\mathbbm{1}_{B_k}(x).
    \end{align}

    For $x \in B_k$, we have
    \begin{align}
    \lvert f(x) - \tilde f(x) \rvert \leq \lvert f(x) - f(a_k) \rvert + \lvert f(a_k) - \tilde f(a_k) \rvert \leq 2 \varepsilon,
    \end{align}
    where we have used that $f$ is 1-Lipschitz. 
    For values outside of the grid over $[-B,B]$, we have
    \begin{align}
        \frac{1}{N}\sum_{i: \lvert y_i \rvert \geq B} \lvert f(y_i) - \tilde f(y_i) \rvert \leq \frac{2}{N} \sum_{i=1}^N \mathbbm{1}\{\lvert y_i \rvert \geq B\} \leq 2\varepsilon,
    \end{align}
    where in the first inequality we have used that $f$ is bounded.
    Hence $\lVert f-\tilde f \rVert_{L_1(P_N)} \leq 4 \varepsilon$.

    We now want to determine the cardinality of the set $\{\tilde f : f\in \text{BL}\}$. As $f$ varies over $\text{BL}$, there are at most $2\lfloor \frac{1}{\varepsilon}\rfloor +1$ choices for $\tilde f(a_0)$ (because $f$ is bounded to be in $[-1,1]$).
    Note that for any $\tilde f$ and  $k \in \{1 ,..., M\}$, we have
    \begin{align}
        \lvert \tilde f(a_k) - \tilde f(a_{k-1})\rvert \leq  \lvert \tilde f(a_k) - f(a_{k})\rvert  +  \lvert \ f(a_k) -  f(a_{k-1})\rvert  +  \lvert  f(a_{k-1}) - \tilde f(a_{k-1})\rvert \leq 3 \varepsilon.
    \end{align}
Therefore there are at most $7$ choices for $\tilde f(a_k)$ once $\tilde f(a_{k-1})$ has been chosen.

The collection $\{\tilde f : f\in \text{BL}\}$ is thus a $4\varepsilon$-cover (w.r.t. the $L_1(P_N)$-norm) of BL and has a cardinality upper bounded by  $(2\lfloor \frac{1}{\varepsilon}\rfloor +1 ) 7^{M-1}$. Hence
\begin{align}
    N(4\varepsilon,\text{BL},L_1(P_N)) \leq \bigg(2\lfloor \frac{1}{\varepsilon}\rfloor +1 \bigg) 7^{2\lfloor\frac{C}{\varepsilon}\rfloor},
\end{align}
which completes the proof.
\end{proof}

\section{Proof of Proposition \ref{Prop:ConsistencyOfSharpBounds}}
\label{Appendix:proof_prop1}

\begin{proof}
We have $\hat G_N (\xi) = \frac{1}{\tilde n_T}\sum_{i=1}^N T_i \mathbbm{1}\{\hat \varepsilon_{i,N}(1) \leq \xi\}$ and   $\hat F_N(\xi) = \frac{1}{\tilde n_{\bar T}}\sum_{i=1}^N \bar{T}_i \mathbbm{1}\{\hat \varepsilon_{i,N}(0) \leq \xi\}$. Under Assumption \ref{as:RandomAssignment}.1, $\tilde n_T$ and $\tilde n_{\bar T}$ are binomial random variables, whereas under Assumption \ref{as:RandomAssignment}.2, they are fixed to $\tilde n_T = n_T$ and $\tilde n_{\bar T} = n_{\bar T}$. In case either $\tilde n_T$ or $\tilde n_{\bar T}$ are 0, we define $\hat G_N = \delta(0)$ or  $\hat F_N = \delta(0)$, respectively. Let $\mathcal{D}$ be the event that $\tilde n_T > 0 $ and $\bar{\mathcal{D}}$ the event that $\tilde n_{\bar T} > 0 $.
We then have the following (random) expectations $\E_{\hat G_N}[\varepsilon_{N}(1)] = \E_{\hat G_N}[\varepsilon_{N}(1)|\mathcal{D}] P(\mathcal{D}) = \frac{P(\mathcal{D})}{\tilde n_T} \sum_{i=1}^N T_i \varepsilon_{i,N}(1)$ and  $\E_{\tilde F_N}[\varepsilon_{N}(0)] = \frac{P(\bar{\mathcal{D}})}{\tilde n_{\bar T}} \sum_{i=1}^N \bar{T}_i \varepsilon_{i,N}(0)$.

Now similarly, define $\tilde G_N^{(1)}(\xi) = \frac{1}{\tilde n_T}\sum_{i=1}^N T_i \mathbbm{1}\{\varepsilon_{i,N}(1) \leq \xi\}$ and $\tilde G_N^{(2)}(\xi) = \frac{1}{n_T}\sum_{i=1}^N T_i \mathbbm{1}\{\varepsilon_{i,N}(1) \leq \xi\}$, and $\tilde F_N^{(1)}(\xi) = \frac{1}{\tilde n_T}\sum_{i=1}^N T_i \mathbbm{1}\{\varepsilon_{i,N}(0) \leq \xi\}$ and $\tilde F_N^{(2)}(\xi) = \frac{1}{n_T}\sum_{i=1}^N T_i \mathbbm{1}\{\varepsilon_{i,N}(0) \leq \xi\}$.

    \textit{(i) Weak convergence of the marginals  in probability}\\
The bounded-Lipschitz (or Fortet–Mourier) distance between two measures $\mu, \nu$
\begin{align}\nonumber
    d_{\text{BL}} (\mu,\nu) = \sup\bigg\{\int \phi d\mu  - \int \phi d\nu \, ;\,  \lVert \phi \rVert_\infty + \lVert \phi \rVert_{Lip} \leq 1 \bigg\}
\end{align}
metrizes weak convergence. Let BL be the space of all real-valued bounded Lipschitz functions (i.e., all $\phi: \R \rightarrow \R$ such that $\lVert \phi \rVert_\infty + \lVert \phi \rVert_{Lip} \leq 1$).
We have 
\begin{align}\nonumber
    d_{\text{BL}}(\tilde G_N^{(2)}, G) \leq d_{\text{BL}}(\tilde G_N^{(2)}, G_N) + d_{\text{BL}}(G_N, G) \rightarrow 0 \hspace{0.5cm} \text{in probability,}
\end{align}
where the second term converges to 0 because of Assumption \ref{as:convergJoint} and the first by Proposition \ref{Prop:GC} and Lemma \ref{lemma:BL}. Note that Assumption (b) implies that $\varepsilon_N(1)$ is $L_1$-bounded with respect to $G_N$ as required by Lemma \ref{lemma:BL}.

Additionally, we have  
\begin{align}
   d_{\text{BL}}(\tilde G_N^{(1)}, \tilde G_N^{(2)})\, &= \, \sup_{\phi \in \text{BL}} \E_{\tilde G_N^{(1)}}[ \phi(\varepsilon_{N})] - \E_{\tilde G_N^{(2)}}[ \phi(\varepsilon_{N})]\\ \nonumber
    \, &= \,\sup_{\phi \in \text{BL}}  \frac{P(\mathcal{D})}{\tilde n_T} \sum_{i=1}^N T_i \phi(\varepsilon_{i,N}(1)) -\frac{1}{n_T} \sum_{i=1}^N T_i \phi(\varepsilon_{i,N}(1))  \\ \nonumber
     \, & \leq \,\sup_{\phi \in \text{BL}} \frac{(n_T P(\mathcal{D}) - \tilde n_T)\sum_{i=1}^N T_i \phi(\varepsilon_{i,N}(1))}{n_T \tilde n_T}  
     \,  \leq \, \frac{\lvert n_T P(\mathcal{D}) - \tilde n_T \rvert}{n_T}. 
    \end{align}
   Under Assumption \ref{as:RandomAssignment}.1, we have $P(\mathcal{D}) = 1- (1-\pi_T)^N \rightarrow 1$. With the law of large numbers, we then have $\frac{(n_T P(\mathcal{D}) - \tilde n_T)}{n_T} = o_p(1)$. Under simple random sampling without replacement (Assumption \ref{as:RandomAssignment}.2), $\tilde n_T= n_T$, $P(\mathcal{D}) = 1$, and the term vanishes. 
   We thus have $d_{\text{BL}}(\tilde G_N^{(1)}, G_N) \overset{p}{\rightarrow} 0$ by the triangle inequality, and  $\tilde G_N^{(1)} \overset{w}{\rightarrow} G$ in probability, where ``$\overset{w}{\rightarrow}$'' denotes weak convergence. Similarly, we have  $\tilde F_N^{(1)} \overset{w}{\rightarrow} F$ in probability.

   Now 
   \begin{align}
       d_{\text{BL}}(\hat G_N, \tilde G_N^{(1)})\, &= \, \sup_{\phi \in \text{BL}} \E_{\hat G_N}[ \phi(\hat \varepsilon_{N})] - \E_{\tilde G_N^{(1)}}[ \phi(\varepsilon_{N})] \\\nonumber
       \, &= \, 
      \sup_{\phi \in \text{BL}}  \frac{P(\mathcal{D})}{\tilde n_T} \sum_{i=1}^N T_i \phi(\hat \varepsilon_{i,N}(1)) -\frac{P(\mathcal{D})}{\tilde n_T} \sum_{i=1}^N T_i \phi(\varepsilon_{i,N})
       \\\nonumber
       \, &\leq \, 
      \frac{P(\mathcal{D})}{\tilde n_T} \sum_{i=1}^N T_i \lvert \hat \varepsilon_{i,N} (1)-\varepsilon_{i,N}(1) \rvert = o_p(1)
   \end{align}
   by Assumption (a).
Hence $d_{\text{BL}}(\hat G_N, G_N) \overset{p}{\rightarrow} 0$ by the triangle inequality, and  $\hat G_N \overset{w}{\rightarrow} G$ in probability. Similarly, we have  $\hat F_N \overset{w}{\rightarrow} F$ in probability.

\textit{(ii) Integration to the limit}\\
Assumption (b) allows us to integrate to the limit. Under Assumption \ref{as:RandomAssignment}, there exists a $N_0$ such that for all $N \geq N_0$, we have $\frac{P(\mathcal{D})}{\tilde n_T} \leq \frac{2}{\pi_T N}$ almost surely. Hence
\begin{align}\nonumber
   \frac{P(\mathcal{D})}{\tilde n_T} \sum_{i=1}^N T_i  \varepsilon_{i,N}(1)^2 \mathbbm{1}\{\varepsilon_{i,N}(1)^2 \geq \beta\} 
   \, &\leq \, \frac{2}{\pi_T}\sup_{N\geq N_0} \frac{1}{N} \sum_{i=1}^N \varepsilon_{i,N}(1)^2\mathbbm{1}\{\varepsilon_{i,N}(1)^2 \geq \beta\} \rightarrow 0 \, \, \, \text{a.s.}
\end{align}
as $\beta \rightarrow \infty$. Hence Assumption (b) implies that the population-adjusted potential outcomes $\varepsilon_{i,N}(1)$ are almost surely uniformly square-integrable with respect to the sequence of random distributions $\{\tilde G_N^{(1)}\}$. Similarly $\varepsilon_{i,N}(0)$ are almost surely uniformly square-integrable with respect to $\{\tilde F_N^{(1)}\}$.
Pick a subsequence $\{N_k\}$ along which $\tilde G_{N_k}^{(1)} \overset{w}{\rightarrow} G$ almost surely. 
Integrating to the limt, we have $\E_{\tilde G_{N_k}^{(1)}}[\varepsilon_{N_k}(1)^2] \rightarrow \E_G[\varepsilon(1)^2]$ almost surely. Similarly, $\E_{\tilde F_{N_k}^{(1)}}[\varepsilon_{N_k}(0)^2] \rightarrow \E_F[\varepsilon(0)^2]$ almost surely.

Additionally,
\begin{align}
    \E_{\hat G_N}[\hat \varepsilon_N(1)^2] &- \E_{\tilde G_N^{(1)}}[\varepsilon_N(1)^2] \, = \, \frac{P(\mathcal{D})}{\tilde n_T} \sum_{i=1}^N T_i \hat \varepsilon_{i,N}(1)^2 - \frac{P(\mathcal{D})}{\tilde n_T} \sum_{i=1}^N T_i  \varepsilon_{i,N}(1)^2 \\ \nonumber
    \, &= \, \frac{P(\mathcal{D})}{\tilde n_T} \sum_{i=1}^N T_i (\varepsilon_{i,N}(1) + (f_N(X_i) - \hat f_N(X_i))^2 - \frac{P(\mathcal{D})}{\tilde n_T} \sum_{i=1}^N T_i  \varepsilon_{i,N}(1)^2 \\\nonumber
     \, &= \, \frac{P(\mathcal{D})}{\tilde n_T} \sum_{i=1}^N T_i (f_N(X_i) - \hat f_N(X_i))^2 + \frac{2P(\mathcal{D})}{\tilde n_T} \sum_{i=1}^N T_i \varepsilon_{i,N}(1) (f_N(X_i) - \hat f_N(X_i)) \\\nonumber
     \, &\leq \, P(\mathcal{D})\frac{N}{\tilde n_T} \frac{1}{N} \sum_{i=1}^N (f_N(X_i) - \hat f_N(X_i))^2 \\ \nonumber
     &\, +\, P(\mathcal{D}) \frac{N}{\tilde n_T} \bigg(\frac{1}{N}\sum_{i=1}^N \varepsilon_{i,N}(1)^2\bigg)^{1/2}\bigg(\frac{1}{N} \sum_{i=1}^N (f_N(X_i) - \hat f_N(X_i))^2\bigg)^{1/2} \\\nonumber
     & \rightarrow 0 \hspace{1cm} \text{in probability,}
\end{align}
where we used the Cauchy-Schwarz inequality. Moreover, $\frac{1}{N}\sum_{i=1}^N \varepsilon_{i,N}(1)^2$ is bounded because $\varepsilon_N(1)$ is uniformly square-integrable (Assumption (b)). The convergence follows because of Assumption (a).
Thus $\E_{\hat G_N}[\hat \varepsilon_N(1)^2] - \E_{\tilde G_N}[\varepsilon_N(1)^2] = o_p(1)$. $\E_{\hat F_N}[\hat \varepsilon_N(0)^2] - \E_{\tilde F_N}[\varepsilon_N(0)^2] = o_p(1)$ follows similarly. 
We conclude 
\begin{align}
    \E_{\hat G_N}[\hat \varepsilon_N(1)^2] &\overset{P}{\rightarrow }\E_G[\varepsilon(1)^2] \hspace{1cm} \text{and} \\\nonumber
    \E_{\hat F_N}[\hat \varepsilon_N(0)^2] &\overset{P}{\rightarrow }\E_F[\varepsilon(0)^2].
\end{align}
This $L^2$-convergence (in probability) allows us to also conclude that there exists a subsequence $\{N_k\}$ along which $ \hat \varepsilon_{N_k}(1)$ and $ \hat \varepsilon_{N_k}(0)$ are square-uniformly integrable with respect to $\{\hat G_{N_k}\}$ and $\{\hat F_{N_k}\}$ almost surely. 

\textit{(iii) Convergence of the extremal joint distributions}\\
We define the extremal joint distributions $H^H(\varepsilon_1,\varepsilon_0) = \min\{G(\varepsilon_1),F(\varepsilon_0)\}$ and $H^L = \max\{0,G(\varepsilon_1)+F(\varepsilon_0)-1\}$ and want to show that 
if the marginals converge weakly, so do 
$\hat H_N^H(\varepsilon_1,\varepsilon_0) = \min\{\hat G_N( \varepsilon_1),\hat F_N(\varepsilon_0)\}$ and $\hat H_N^L = \max\{0,\hat G_N(\varepsilon_1)+\hat F_N(\varepsilon_0)-1\}$. The Levy metric
\begin{align}\nonumber
    d_L(F_1,F_2) = \inf\{\epsilon >0: F_1(x-\epsilon \boldsymbol{1}_d) - \epsilon \leq F_2(x) \leq F_1(x+\epsilon \boldsymbol{1}_d) + \epsilon \,   \forall x \in \R^d \}
\end{align}
metrizes weak convergence for multivariate cdfs. We have
\begin{align}
      d_L(\hat H^H_{N}, H^H) \, &=\,  \inf\{\epsilon >0 : \min\{\hat G_N(y_1 - \epsilon),\hat F(y_0 - \epsilon)\}- \epsilon \leq \min\{G(y_1),F(y_0)\} \\\nonumber \, &\hspace{1.3cm}\, \leq \min\{\hat G_N(y_1 + \epsilon),\hat F(y_0 + \epsilon)\}+ \epsilon  \,   \forall (y_1,y_0) \in \R^2 \} \\\nonumber
      \, & \, \leq 2 ( d_L(\hat G_N, G) + d_L(\hat F_N, F)) 
\end{align}
and 
\begin{align}
      d_L(\hat H^L_{N}, H^L) \, &=\,  \inf\{\epsilon >0 : \max\{0,\hat G_N(y_1 - \epsilon)+\hat F(y_0 - \epsilon)-1\}- \epsilon \\\nonumber
      \, &\hspace{1.3cm}\, \leq \max\{0,G(y_1)+F(y_0)-1\} 
      \\ \nonumber
      \, &\hspace{1.3cm}\, \leq \max\{0,\hat G_N(y_1 + \epsilon)+\hat F(y_0 + \epsilon)-1\}+ \epsilon  \,   \forall (y_1,y_0) \in \R^2 \} \\\nonumber
      \, & \, \leq 2 ( d_L(\hat G_N, G) + d_L(\hat F_N, F)). 
\end{align}
 Hence, we have
\begin{align}
    &d_L(\hat H^H_{N_k}, H^H) \rightarrow 0 \, \, \, \text{almost surely and } \\\nonumber
    &d_L(\hat H^L_{N_k}, H^L)  \rightarrow 0 \, \, \, \text{almost surely.}
\end{align}
We conclude $\hat H^H_{N} \overset{w}{\rightarrow} H^H$ and $ \hat H^L_{N} \overset{w}{\rightarrow} H^L$ in probability.

With $\hat \varepsilon_{N_k}(q)$ being uniformly square-integrable with respect  to $\{\hat G_{N_k}\}$ and $\{\hat F_{N_k}\}$ almost surely, we have  that $\hat \varepsilon_{N_k}(1) \hat \varepsilon_{N_k}(0)$ is uniformly integrable with respect to $\{\hat H^H_{N_k}\}$ and $\{\hat H^L_{N_k}\}$ almost surely. Hence
\begin{align}
    &\E_{\hat H^H_N}[\hat \varepsilon_N(1) \hat \varepsilon_N(0)]\overset{P}{\rightarrow}  E_{H^H}[ \varepsilon(1)\varepsilon(0)] = \sup_{h \in \mathcal{H}}\E_h[\varepsilon(1)\varepsilon(0)] \, \, \, \text{and} \\\nonumber
    &\E_{\hat H^L_N}[\hat \varepsilon_N(1)\hat \varepsilon_N(0)] \overset{P}{\rightarrow}  E_{H^L}[\varepsilon(1)\varepsilon(0)] = \inf_{h \in \mathcal{H}}\E_h[\varepsilon(1)\varepsilon(0)],
\end{align}
where $\mathcal{H}$ is the collection of all bivariate distributions with marginals $G$ and $F$. 
The proposition now follows immediately.
\end{proof}

\section{Practical Considerations regarding Benefits of Sharp Bounds}
\label{App:BenSharpBounds}
\begin{figure}
     \centering
     \begin{subfigure}[b]{0.49\textwidth}
         \centering
    \includegraphics[width=\textwidth]{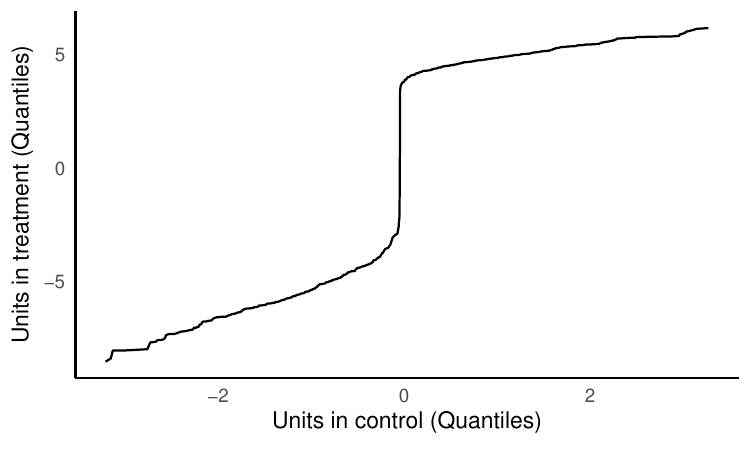}

     \end{subfigure}
     \hfill
     \begin{subfigure}[b]{0.49\textwidth}
         \centering
         \includegraphics[width=\textwidth]{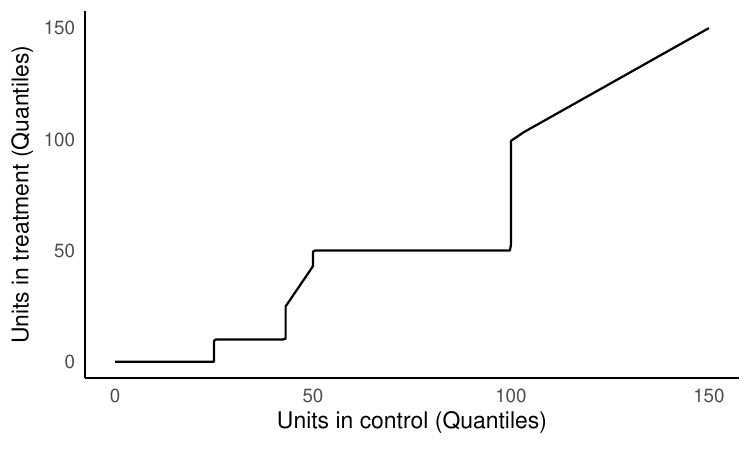}

     \end{subfigure}
     \hfill
        \caption{(Left) Q-Q plot of the sample-adjusted potential outcomes in treatment and control for our simulation in Section \ref{Sec:SimRes} for $\theta = 0.5$. (Right) Q-Q plot of the sample-adjusted potential outcomes from \citep{Harrison_Michelson_2012}.}
        \label{fig:QQplots}
\end{figure}
In this section, we compare the sharp variance estimator for regression adjustment with the conventional and Cauchy-Schwarz variance estimators. For ease of exposition, we focus on linear regression adjustment for which the sample-adjusted potential outcomes are mean-centered. This allows us to switch between second moments and covariances. We will also assume that the random indicators are exhaustive, that is $N - n_T = n_{\bar T}$.

Comparing our sharp variance estimator (Equation \ref{Eq:estmVarBoundsReg}) with the Cauchy-Schwarz variance estimator (Equation \ref{Eq:CSVarEst}), we see that they only differ in how they treat the non-identified cross-term. 
For the sharp variance estimator, the term is estimated based on the cross-moment of the sample-adjusted potential outcomes under the extremal joint distribution (i.e, assuming the sample-adjusted outcomes to be comonotonic). This cross-moment will match the estimate in the Cauchy-Schwarz estimator if and only if the Cauchy-Schwarz inequality yields an equality, that is, when the sample-adjusted potential outcomes are linearly dependent. Deviations from linear dependence can be detected by inspection of Q-Q plots. Figure \ref{fig:QQplots}(Left) shows the Q-Q plot for the sample-adjusted potential outcomes from the simulation performed in Section \ref{Sec:SimRes} for $\theta =0.5$, and Figure \ref{fig:QQplots}(Right) presents the Q-Q plot of the sample-adjusted potential outcomes from \citep{Harrison_Michelson_2012}, which we discuss in Section \ref{Sec:Appl}. Both plots clearly show deviations from the diagonal, apropos of the efficiency gains we found numerically in Figure \ref{fig:simulationResults} and Table \ref{tab:results_applc}.

In the case of linear regression, the sample-adjusted potential outcomes have zero mean. If we further assume a balanced design, the cross-moment under the extremal joint (Equation \ref{Eq.CalcBounds})
\begin{align}\nonumber
     \E_{ \hat H^H_N}(\hat\varepsilon_{N}(1)\hat\varepsilon_{N}(0)) = \sum_{i=1}^P (p_i - p_{i-1})\hat \varepsilon_{(\lceil \tilde n_{T} p_i \rceil)1}^{obs}\hat \varepsilon_{(\lceil \tilde n_{\bar T} p_i \rceil)0}^{obs} = Cov_{\hat H^H_N}(\hat \varepsilon^{obs}_1,\hat \varepsilon^{obs}_0)
\end{align}
reduces to the covariance between the sorted sample-adjusted potential outcomes. Defining $\rho$ to be the correlation between the sorted sample-adjusted potential outcomes in treatment and control, we have
\begin{align}\nonumber
    \frac{\E_{ \hat H^H_N}(\hat\varepsilon_{N}(1)\hat\varepsilon_{N}(0))}{ (\frac{1}{N}\sum_{i=1}^N\hat \varepsilon_{i,N}(1)^2)^{1/2}(\frac{1}{N} \sum_{i=1}^N\hat\varepsilon_{i,N}(0)^2)^{1/2}} = \rho.
\end{align}
The ratio of the cross-term estimate of the sharp variance estimator and the Cauchy-Schwarz estimator is given by the correlation $\rho$ of the sorted sample-adjusted potential outcomes.
In balanced designs, the correlation between sorted sample-adjusted outcomes provides an additional diagnostic tool beyond Q-Q plots to gauge the benefits of the sharp variance estimator compared to the Cauchy-Schwarz variance estimator.

The conventional variance estimator (Equation \ref{eq:convVarEst}) is looser than the Cauchy-Schwarz variance estimator when the AM-GM inequality is loose. That is the case when the variances in treatment and control differ. 
It is straightforward to see, however, that very uneven variances in treatment and control lead the variance of the general regression estimator (Equation \ref{Eq:VarOracle}) to be dominated by the identified part. Let $\kappa = \frac{\sigma_1}{\sigma_0}$ be the ratio of the standard deviation of the sample-adjusted potential outcomes in treatment and control. We then have
\begin{align}\nonumber
    \frac{N\hat V_N^H}{\frac{N-n_T}{n_T} \E_{\hat G_N}\big[\hat\varepsilon_N(1)^2\big] + \frac{N-n_{\bar{T}}}{n_{\bar{T}}}\E_{\hat F_N}\big[\hat\varepsilon_N(0)^2\big]} = \frac{\frac{N-n_T}{n_T} \kappa^2+ \frac{N-n_{\bar{T}}}{n_{\bar{T}}} + 2 \kappa \rho}{\frac{N-n_T}{n_T} \kappa^2+ \frac{N-n_{\bar{T}}}{n_{\bar{T}}}} \rightarrow 1
\end{align}
as $\kappa \rightarrow \infty$. Hence, with significantly different variances in treatment and control, the differences among the sharp, Cauchy-Schwarz, and conventional variance estimators become negligible. 

Similarly, the differences among these estimators become less relevant for extremely unbalanced designs. We have
\begin{align}\nonumber
    \frac{N\hat V_N^H}{\frac{n_{\bar T}}{n_T} \E_{\hat G_N}\big[\hat\varepsilon_N(1)^2\big] + \frac{n_T}{n_{\bar{T}}}\E_{\hat F_N}\big[\hat\varepsilon_N(0)^2\big]} \rightarrow 1
\end{align}
if either $\frac{n_{\bar T}}{n_T} \rightarrow \infty$ or $\frac{n_T}{n_{\bar{T}}} \rightarrow \infty$.  Biomedical randomized controlled trials tend to use balanced designs, highlighting one key domain that may benefit from sharp variance bounds.

\section{Empirical Example: Harrison and Michelson 2012}
\label{Sec:Appl}
To illustrate the use of sharp variance bounds when average treatment effects are estimated via linear covariate adjustment, we return to the empirical example presented by \citet{Aronow_Green_Lee_2014}, an experiment on fundraising for an organization supporting same-sex marriage \citep{Harrison_Michelson_2012}.  A total of 1,561 subjects were called with a fundraising appeal.  Half (781) were randomly assigned to receive an appeal from a caller who identified themselves as LGBT, while 780 received the same appeal but with no mention of the caller's LGBT identification. The point estimates are negative, suggesting that contributions on average diminished when callers revealed their LGBT identification.

We begin by replicating the sharp upper bound variance reported by \citet{Aronow_Green_Lee_2014} for the difference-in-means estimator.  The results in Table \ref{tab:results_applc} show that the sharp upper bound is lower than the conventional variance estimate.  The sharp upper bound is also lower than the Cauchy-Schwarz upper bound estimator.  

This dataset features a set of covariates (age, sex, political affiliation) that are jointly significant predictors of the outcome. The right panel of Table \ref{tab:results_applc}
shows that the benefits of using sharp upper bounds persist when using linear regression to adjust for these covariates.  We find a $6\%$ decrease in variance when using the sharp variance estimator compared to the conventional (robust) variance estimator.

\begin{table}[!t]
    \centering
    \begin{tabular}{rcc|cc}
         & Unadjusted & & Adjusted &  \vspace{0.1cm}
         \\\hline 
         & Variance Estimate & Ratio& Variance Estimate& Ratio \\
         Conventional & 0.199& 0.938 & 0.197 &0.940 \\
         Cauchy-Schwarz&0.195 & 0.954 & 0.194 & 0.956\\
         Sharp & 0.186& - & 0.185 & -\\
    \end{tabular}
    \caption{Variance estimates and ratio of the variance estimate to the sharp variance bound.  Data are from \citep{Harrison_Michelson_2012}. The left panel reports results for  difference-in-means (unadjusted); the right panel, for linear regression adjustment (adjusted).}
    \label{tab:results_applc}
\end{table}

\section{Properties of Sharp Variance Bounds} 
\label{App:Sharpness}
We consider a finite population $U_N = \{Y_i(0),Y_i(1)\}_{i=1}^N$, where every unit is associated with two potential outcomes $Y(0)$ and $Y(1)$. Based on this finite population, we have the joint CDF of potential outcomes $\Gamma_N (y_0,y_1) = \frac{1}{N}\sum_{i=1}^N \mathbbm{1}[Y_i(0)\leq y_0 , Y_i(1)\leq y_1]$, which has
marginals
$G_N(y) = \frac{1}{N}\sum_{i=1}^N \mathbbm{1}[Y_i(1)\leq y]$ and $F_N(y) = \frac{1}{N}\sum_{i=1}^N \mathbbm{1}[Y_i(0)\leq y]$.
These finite population distributions are useful, as their moments match the finite population quantities we are interested in. Define $Y(1) \sim G_N$. Then, for example, $\mathbb{E}_{G_N}[Y(1)] = \frac{1}{N}\sum_{i=1}^N Y_i(1) = \bar{Y}(1)$ and $\frac{N}{N-1}\text{Var}_{G_N}(Y(1)) = \frac{1}{N-1}\sum_{i=1}^N \sum_{i=1}^N \big(Y_i(1) - \bar{Y}(1)\big)^2 = S^2(Y(1))$. Importantly, the variance of the treatment effect estimators $\hat \tau$ considered here (see Sec \ref{Sec:RegAdj}) only depends on the potential outcomes through the joint distribution $\Gamma_N$. It is thus sensible to restrict our attention to variances of the treatment effect estimator under various joint distributions $\text{Var}_{\Gamma_N} (\hat \tau)$ (instead of having to consider all possible finite populations $U_N$).

We want to cleanly separate the issue of identification from the issue of inference. We therefore distinguish between the principal problem of causal inference, namely that $Y_i(0)$ and $Y_i(1)$ are never jointly observed, and the fact that not all $\{Y_i(0)\}_{i=1}^N$ and $\{Y_i(1)\}_{i=1}^N$ are observed but only those with the corresponding treatment assignment.
The first problem leads to the fact that the joint distribution $\Gamma_N$ is never observed; the second to the fact that we do not observe $G_N(y)$ and $F_N(y)$ but instead the estimates $\hat G_N(y) = \frac{1}{n_1}\sum_{i=1}^N Z_i\mathbbm{1}[Y_i(1)\leq y]$ and $\hat F_N(y) = \frac{1}{n_0}\sum_{i=1}^N(1-Z_i)\mathbbm{1}[Y_i(0)\leq y]$.
We consider the first an issue of identification and the second one of estimation.

We focus on the issue of identification by assuming the sets of potential outcomes $\{Y_i(1)\}_{i=1}^N$  and $\{Y_i(0)\}_{i=1}^N$ are given but that we have no information on their correspondence, i.e., their joint distribution remains $\Gamma_N$ unknown. 
 This leads to the following definition:
\begin{definition}[Variance bound]
A function $\text{VB}: U_N \rightarrow \mathbb{R}$ is said to be design-compatible if it is invariant under permutations of the correspondence of potential outcomes: Consider a permuted population $\tilde U_N = \{Y_i(1),Y_{\pi(i)}(0)\}_{i=1}^N$, where $\pi$ is a permutation of the indices, then design-compatible functions satisfy
\begin{align}\nonumber
    \text{VB}(U_N) = \text{VB}(\tilde U_N).
\end{align}
Let $\hat \tau$ be a treatment effect estimator, and let $\text{Var}_{\Gamma_N} (\hat \tau)$ be its variance under the joint distribution $\Gamma_N$ corresponding to a population $U_N$ of potential outcomes.
A design-compatible function is said to be a variance bound if $\text{Var}_{\Gamma_N} (\hat \tau) \leq \text{VB}(U_N)$ for all possible populations $U_N$.
\end{definition} 

Essentially, we want to find functions that have no information on the correspondence of potential outcomes that are upper bounds on the variance of a treatment effect estimator $\text{Var}_{\Gamma_N} (\hat \tau)$ regardless of what the unknown joint distribution $\Gamma_N$ is. We call these functions \textit{variance bounds}.  

It will turn out that we only have to consider variance bounds that are functionals of the marginals, i.e., $\text{VB}: \mathcal{P}(\mathbb{R}) \times \mathcal{P}(\mathbb{R}) \rightarrow \mathbb{R}$, as this class contains an element that dominates all variance bounds (see Proposition \ref{Prop:Dom}). For this class, it is helpful to consider the set of all joint distributions with marginals $F_N, G_N$, which we denote with $\Pi(F_N,G_N)$ (see for example \citep{Villani_2009}).

It is natural to ask whether there is an optimal variance bound. We begin by considering the question of an optimal variance bound in the class of variance bounds that are functionals of the marginals. Given marginals $F_N, G_N$, no variance bound that is purely a functional of the marginals can be smaller than the largest possible variance of the treatment effect estimator under all joints with marginals $F_N, G_N$. For these variance bounds, we thus have
\begin{align}
 \text{VB}(F_N,G_N) \geq  \sup_{\gamma \in \Pi(F_N,G_N)} \text{Var}_{\gamma}(\hat \tau).
\end{align}
If there exists a variance bound that is attained for some joint $\gamma$ for every possible set of marginals $F_N, G_N$, we will call this variance bounds \textit{sharp}\footnote{While the term sharpness was not defined in \citep{Aronow_Green_Lee_2014}, it is their notion that we try to define here. Recently, a definition of sharpness has been offered by \citet{Harshaw_Middleton_Sävje_2021}. Our notion of sharpness entails theirs. To stick with the language established in \citep{Aronow_Green_Lee_2014}, we will call bounds fulfilling our definition sharp and suggest the term weak-sharpness for the definition offered in \citep{Harshaw_Middleton_Sävje_2021}.}.
\begin{definition}[Sharpness]
      A variance bound $\text{VB}$ is said to be sharp if for all marginals $F_N, G_N$, there exists a joint over the potential outcomes $\gamma \in \Pi(F_N,G_N)$ such that
      \begin{align}\nonumber
          \text{VB}(F_N,G_N) =\text{Var}_\gamma (\hat \tau).
      \end{align}
\end{definition}
The existence of such sharp variance bounds in the setting of two-arm designs assuming SUTVA is guaranteed by the Frechét-Hoeffding copula bounds, see Section \ref{Sec:FHBounds}.

We will now see that sharp variance bounds are optimal. 
\begin{proposition}
\label{Prop:Dom}
    A sharp variance bound $\text{SVB}$ dominates all other variance bounds. Let $U_N$ be an arbitrary population of potential outcomes, let $F_N, G_N$ be the corresponding marginals and let $C$ be an arbitrary variance bound, then \begin{align}
        \text{SVB}(F_N,G_N) \leq C(U_N).
    \end{align}
\end{proposition}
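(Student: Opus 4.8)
The plan is to exploit the two defining properties of an arbitrary variance bound $C$ — that it is an upper bound on $\text{Var}_{\Gamma_N}(\hat\tau)$ for \emph{every} population, and that it is invariant under permutations of the correspondence — together with the elementary finite-population fact that two populations share the marginals $F_N,G_N$ precisely when one is obtained from the other by permuting the pairing of potential outcomes. Concretely, I would first use sharpness to select a joint $\gamma^\ast\in\Pi(F_N,G_N)$ with $\text{SVB}(F_N,G_N)=\text{Var}_{\gamma^\ast}(\hat\tau)$, then realize $\gamma^\ast$ as a genuine size-$N$ population $\tilde U_N$ that is a permutation of $U_N$, and finally run the chain
\begin{align}\nonumber
\text{SVB}(F_N,G_N)=\text{Var}_{\gamma^\ast}(\hat\tau)\le C(\tilde U_N)=C(U_N),
\end{align}
where the inequality is the variance-bound property of $C$ applied to $\tilde U_N$ and the final equality is the design-compatibility of $C$.

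The one genuinely substantive step — and the main obstacle — is the middle one: the joint $\gamma^\ast$ furnished by the definition of sharpness lives in the convex set $\Pi(F_N,G_N)$ and need not a priori be a deterministic pairing, whereas the variance-bound property of $C$ is only asserted for honest size-$N$ populations (permutation couplings). To bridge this gap I would note that, for every estimator in the paper, $\text{Var}_\gamma(\hat\tau)$ depends on $\gamma$ only through the cross-moment $\int y_1 y_0\,d\gamma$ — all the means entering the variance are functionals of the marginals alone — and is therefore \emph{affine} in $\gamma$. Its supremum over the convex, compact polytope $\Pi(F_N,G_N)$ is thus attained at an extreme point, and by the Birkhoff--von Neumann theorem the extreme points of $\Pi(F_N,G_N)$ (equal mass on $N$ points in each margin) are exactly the permutation couplings. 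Since $\text{SVB}$ is itself a variance bound it dominates $\text{Var}_\gamma(\hat\tau)$ for every permutation coupling, giving $\text{SVB}(F_N,G_N)\ge\sup_{\gamma\in\Pi(F_N,G_N)}\text{Var}_\gamma(\hat\tau)$; combined with the reverse inequality from sharpness this forces equality, and the supremum is attained at a permutation $\pi^\ast$. Equivalently, and more concretely, one may simply invoke Lemma \ref{lemma:Hoeffding}: the Fréchet--Hoeffding upper bound is realized by the comonotone coupling, which for a finite population is exactly the permutation that sorts $\{Y_i(1)\}$ and $\{Y_i(0)\}$ and pairs them in the same order.

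With this realization in hand the proof closes immediately: the population $\tilde U_N$ induced by $\pi^\ast$ has joint distribution $\gamma^\ast$ and the same marginals as $U_N$, so it is a permutation of $U_N$; applying the inequality $\text{Var}_{\gamma^\ast}(\hat\tau)\le C(\tilde U_N)$ and then $C(\tilde U_N)=C(U_N)$ yields $\text{SVB}(F_N,G_N)\le C(U_N)$, which is the claim. I expect the only points requiring care in the write-up to be the affineness-plus-extreme-point argument (or, alternatively, the clean assertion that in the finite-population setting equality of marginals is equivalent to equality up to a permutation of correspondences) and the bookkeeping confirming that the means in $\text{Var}_\gamma(\hat\tau)$ are functions of the marginals only, so that $\gamma$ enters solely through the cross-moment.
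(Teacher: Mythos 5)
Your proposal is correct, and its skeleton is the same as the paper's: use sharpness to produce a joint $\gamma^\ast \in \Pi(F_N,G_N)$ attaining $\text{SVB}(F_N,G_N)$, realize $\gamma^\ast$ as a permuted population $\tilde U_N$, and combine the variance-bound property of $C$ applied to $\tilde U_N$ with design-compatibility, $C(\tilde U_N)=C(U_N)$; that you argue directly while the paper argues by contradiction is immaterial. The genuine difference is at the realization step, which you correctly flag as the only substantive obstacle and which the paper simply asserts: the paper states that some permutation $\pi$ makes $\tilde U_N=\{Y_i(1),Y_{\pi(i)}(0)\}$ have joint distribution $\gamma$, which is false for a general element of $\Pi(F_N,G_N)$ (e.g., the independence coupling of two empirical measures with $N$ distinct atoms splits mass across $N^2$ points and is no permutation coupling). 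Your bridge fills this gap: since $\text{Var}_\gamma(\hat\tau)$ depends on $\gamma$ only through the cross-moment and is therefore affine in $\gamma$, its supremum over the transportation polytope is attained at an extreme point, and by Birkhoff--von Neumann (working at the level of unit-indexed doubly stochastic matrices, which also handles ties among outcome values) the extreme points are exactly the permutation couplings; your alternative of directly exhibiting the comonotone sorting permutation via Lemma \ref{lemma:Hoeffding} is even more concrete, since the cross-term enters the variance with positive sign so the maximal variance is attained at the comonotone coupling, which for finite populations is a sort-and-pair permutation. In short, your write-up proves what the paper's proof takes for granted; the paper's version buys brevity at the cost of leaving the permutation realization of $\gamma$ unjustified, while yours is the rigorous completion of the same argument.
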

\begin{proof}
    Assume there exists a population $U_N = \{Y_i(1),Y_i(0)\}$ (with marginals $F_N,G_N$) such that $C(U_N) < \text{SVB}(F_N,G_N)$. Because $\text{SVB}(F_N,G_N)$ is sharp, there exists a joint distribution $\gamma \in \Pi(F_N,G_N)$ such that $\text{SVB}(F_N,G_N) = \text{Var}_{\gamma}(\hat \tau)$. There further exists a permutation $\pi$ of indices such that $\tilde U_N = \{Y_i(1),Y_{\pi(i)}(0)\}$ has the joint distribution $\gamma$. Now by design-compatibility of $C$, we have $C(U_N) = C(\tilde U_N)$. Hence, we have $C(\tilde U_N) <\text{Var}_{\gamma}(\hat \tau) $, which is a contradiction.
\end{proof}
It is unclear if this notion of `optimality' remains useful beyond the setting considered in this paper (i.e., two-arm trials without SUTVA violations, see Section \ref{Sec:FinitPopSec}). In more complicated designs, variance bounds may need to be more general functions of the potential outcomes (not only functions of the marginals $F_N,G_N$). Moreover, it is unclear under which circumstances variance bounds fulfilling our definition of sharpness exist. In settings in which no sharp bounds exist, the corresponding notion of `optimality' becomes void.
\end{document}